\documentclass[10pt,conference]{IEEEtran}
\usepackage{cite}
\usepackage{amsmath,amssymb,amsthm}
\newtheorem{definition}{Definition}
\newtheorem{assumption}{Assumption}
\newtheorem{estimation}{Estimation}

\newtheorem{lemma}{Lemma}
\newcommand{\Skel}{\mathsf{Skel}}
\newcommand{\Min}{\mathsf{Min}}
\newcommand{\Norm}{\mathsf{Norm}}

\newcommand{\Cover}{\mathsf{Cover}}
\newcommand{\Ability}{\mathsf{Ability}}
\newcommand{\Hit}{\mathsf{Hit}}
\newcommand{\Pred}{\mathsf{Pred}}
\usepackage{listings}
\usepackage[caption=false,font=footnotesize]{subfig}
\usepackage{graphicx}
\usepackage{booktabs}
\usepackage{url}
\usepackage{tikz}
\usetikzlibrary{positioning,arrows.meta,decorations.pathreplacing}
\usepackage[ruled,vlined,linesnumbered]{algorithm2e}
\DontPrintSemicolon
\SetKwProg{Fn}{Function}{:}{}
\usepackage{xcolor}
\usepackage[most]{tcolorbox}
\newtcolorbox{rqbox}{
  colback=gray!8,
  colframe=gray!50,
  boxrule=0.5pt
}
\newtcolorbox{casebox}{
  colback=blue!6,
  colframe=blue!50!black,
  boxrule=0.5pt,
  arc=2pt,
  left=6pt,
  right=6pt,
  top=4pt,
  bottom=4pt
}

\lstdefinestyle{skel}{
  basicstyle=\ttfamily\scriptsize,
  columns=fullflexible,
  breaklines=true,
  frame=single,
  xleftmargin=0.2em,
  framexleftmargin=0.2em,
  aboveskip=0.4em,
  belowskip=0.2em
}
\newenvironment{smallequation}
  {\begingroup\small\begin{equation}}
  {\end{equation}\endgroup}
\begin{document}

\title{LLM-Assisted Automatic Security Proofs for Cryptographic Protocols: How Far Are We?}

\author{
\IEEEauthorblockN{
Tianjian Liu\IEEEauthorrefmark{1}\IEEEauthorrefmark{6},
Shicheng Feng\IEEEauthorrefmark{2},
Jin'ao Shang\IEEEauthorrefmark{1},
Xiaoting Lyu\IEEEauthorrefmark{1},
Bin Wang\IEEEauthorrefmark{3},
Zonghua Zhang\IEEEauthorrefmark{4},
Lei Xue\IEEEauthorrefmark{5}\IEEEauthorrefmark{6}*,
Wei Wang\IEEEauthorrefmark{1}*
}

\IEEEauthorblockA{
\IEEEauthorrefmark{1}Xi'an Jiaotong University, Xi'an, China\\
\{tianjian.liu, jinao\_s\}@stu.xjtu.edu.cn,
\{xiaoting.lyu, wei.wang\}@xjtu.edu.cn
}

\IEEEauthorblockA{
\IEEEauthorrefmark{2}Tianjin University, Tianjin, China\\
fengshicheng@tju.edu.cn
}

\IEEEauthorblockA{
\IEEEauthorrefmark{3}Zhejiang Key Laboratory of Artificial Intelligence of Things (AIoT) Network and Data Security\\
wangbin02@xidian.edu.cn
}

\IEEEauthorblockA{
\IEEEauthorrefmark{4}CRSC Research \& Design Institute Group Co., Ltd\\
zhangzonghua@crscd.com.cn
}

\IEEEauthorblockA{
\IEEEauthorrefmark{5}Sun Yat-sen University, Guangzhou, China\\
xuelei3@mail.sysu.edu.cn
}

\IEEEauthorblockA{
\IEEEauthorrefmark{6}Shenzhen Loop Area Institute, Shenzhen, China
}

\IEEEauthorblockA{
*Corresponding authors: Lei Xue and Wei Wang.
}
}

\maketitle

\begin{abstract}
Large language models (LLMs) have shown strong potential for assisting software and security analysis tasks, yet their effectiveness in cryptographic symbolic protocol verification remains insufficiently understood.

In this paper, we conduct the first systematic evaluation of the capability of state-of-the-art LLMs in cryptographic symbolic protocol verification. To quantify this capability, we propose \textsc{CRoST} (Coverage Rate of Solve Tree), a proof-based metric derived from the verifier's proof skeleton that measures the similarity between generated lemmas and reference lemmas. We then establish the rationale of \textsc{CRoST} through both theoretical analysis and empirical validation. The evaluation results show that state-of-the-art models achieve 38.82\% coverage on average, with 14.4\% of generated lemmas exceeding 80\% coverage, indicating that LLMs can already generate useful lemmas to a certain extent. However, they still exhibit non-trivial failure modes on complex multi-phase protocols, show diminishing returns under naive scaling, and incur substantial verification overhead. These findings clarify the practical potential and limitations of LLMs for protocol verification and motivate future work on complex real-world protocols.

\end{abstract}

\begin{IEEEkeywords}
LLM, Formal verification, Cryptographic protocol, Symbolic model
\end{IEEEkeywords}

\section{Introduction}
\label{sec:introduction}
% 重要性
% 创新性
% 时效性
% 表述 没辙
% 工作量

Cryptographic protocols are designed to coordinate security-critical interactions and achieve security goals such as secrecy and authentication. In symbolic models, these goals are formalized as security properties over protocol executions, enabling verification tools to reason about whether an active adversary can break them. This style of analysis has proved practically valuable by exposing subtle flaws in widely deployed or standardized protocols, e.g., PCS violations enabled by Signal’s session handling layer \cite{10.5555/3620237.3620307}, multiple confirmed vulnerabilities in Apple/Samsung proximity tracking protocols \cite{10.5555/3766078.3766353}, and critical attacks on the EMV payment standard \cite{9519404}.

However, as shown in Figure \ref{fig:formal-tool-struction}, completing a protocol verification typically involves \emph{(Step 1)} formalizing the cryptographic protocol, \emph{(Step 2)} defining security properties, and \emph{(Step 3)} analyzing counterexample traces \cite{10.1016/j.ic.2007.07.002}.
This process still requires substantial expert knowledge and manual effort~\cite{10.1145/3133956.3134063, 10.5555/3766078.3766353, 10.1145/3133956.3134043, 10.5555/3620237.3620607}.
As reported by the Galois team~\cite{9733177}, formally verifying the AES-256-GCM and SHA-384 implementations in AWS LibCrypto required approximately nine person-months of work by experienced proof engineers. 

\begin{figure}[htb]
    \centering
    \includegraphics[width=0.95\linewidth]{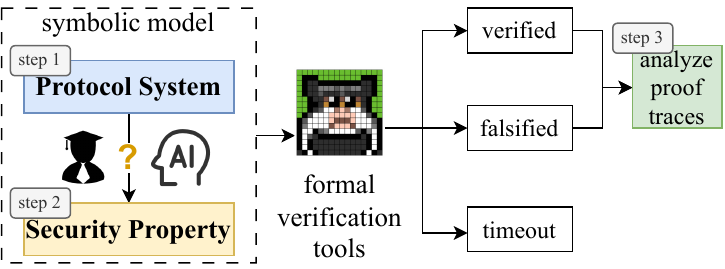}
    \caption{Symbolic model verification pipeline.}
    \label{fig:formal-tool-struction}
\end{figure}

Recent advances in LLMs \cite{pan-etal-2023-logic,NEURIPS2023_44414694,10.5555/3295222.3295349,DBLP:conf/iclr/HanRWAP22,DBLP:conf/iclr/JiangWZL0LJLW23} suggest a possible route for automatically extracting unambiguous and analyzable specifications, thereby lowering the barrier to applying formal methods. Prior work on combining LLMs with symbolic model verification tools for cryptographic protocols has mainly studied the translation of natural-language protocol descriptions into tool-checkable symbolic models \emph{(Step~1)} \cite{10.1109/ICSE55347.2025.00197}. However, the complementary task of specifying security properties \emph{(Step~2)} remains largely manual. Symbolic verifiers require explicit security goals as input, and these goals must be written as tool-checkable lemmas. Writing and refining such lemmas requires expert knowledge; it also determines what the verifier actually checks and how difficult the subsequent proof search becomes \cite{10.1007/978-3-642-39799-8_48,tamarin-manual,10.1007/978-3-030-59013-0_1,10.1007/978-3-031-17143-7_4}. We therefore focus on security-property specification for already available protocol models, and study whether LLMs can generate tool-checkable security properties in \textbf{Cryptographic Symbolic Protocol Verification (CSPV)} \emph{(Step~2)}.

To answer this question, this paper presents a systematic empirical study that focuses on two research questions:
\begin{itemize}
    \item \textbf{What are the capability boundaries of LLMs for security-property generation in CSPV?}
    \item \textbf{How do LLMs perform across different task types in CSPV?}
\end{itemize}

% %相比于fuzzing测试有明确定义的测试目标，例如触发crash，密码协议符号模型的安全属性大部分需要人为定义，并且密码学协议的安全属性定义由于场景多变，并没有像软件测试中的CWE等公开vulnerabilities数据库。传统的自动化生成安全属性方式\cite{10.1007/978-3-030-59013-0_1}局限于secrecy性质，例如识别hash或者加密函数的参数，并证明敌手不存在获得之的路径。

% 这里说明难点、该工作和其它工作不同的点，但有没有必要在这里说模糊测试和数据集不同的问题？也就是说，是不是说的有点过于详细了？挑战性能不能就改成两个点简单说明，然后在后面评测方法和实验部分再具体阐述。此外，这里少了一个少样本的特征
% 测评CSPV这一问题存在以下的难点：1.同一安全属性的形式化表述具有多样性，因此难以用简单的文本相似对比一个生成的安全属性和正确的安全属性之间的相似度。为了获得安全属性在验证过程中的深层次特征，我们引入了一种基于Tamarin证明骨架的解决级路径覆盖度量，该度量可以覆盖基于证明层的安全属性特征。我们通过突变测试从理论和实证上进一步证明了这一指标。

% 用before
There are two challenges in answering the above questions.

First, due to the diversity of representations (We will give an example in Section \ref{subsec:metric_motivatioin}) for the same security property, it is difficult to measure the similarity between a generated security property and a target correct security property using simple text-based comparison.
To capture deeper features of security properties in the verification process, we introduce a proof-based metric derived from Tamarin's proof skeleton, which reflects security properties feature at the proof-search layer \cite{tamarin-manual, 10.1007/978-3-642-39799-8_48}.
We further justify the rationale and effectiveness of this metric through theoretical analysis and mutation-based empirical validation.

% 形式化验证语法知识训练样本少，因此而产生的语法错误可能会使得LLM不能充分发挥生成安全属性的能力

Second, symbolic protocol verification tools such as Tamarin use domain-specific specification languages to describe protocols and their security properties. In such languages, security properties often involve temporal semantics, deeply nested logical structures, and protocol-specific constraints. These characteristics make it difficult for LLMs to generate syntactically correct security properties, even when the intended property is conceptually simple \cite{11334235}. In this work, we therefore investigate the effect of few-shot prompting on LLM-based security property generation and evaluate whether lightweight in-context examples can improve syntactic validity and downstream verification performance.

%这里还是需要有一个过渡句讨论RQ34

% 中文直译：基于该评估框架，我们开展实证研究，目标解决以下的几个问题：
% RQ1：LLM在CSPV的语法正确表现如何？
% RQ2：CRoST的测评方法是否有效？
% RQ3：在CSPV中，LLM的能力边界是什么？
% RQ4：在不同类型的任务下，LLM的表现如何？
% RQ5：LLM在CSPV中开销有多大？

In summary, our contributions are as follows:
% 模型的数量多、分类广是否也可以作为contribution
\begin{itemize}
  \item We present the first systematic empirical study of LLM capability in CSPV tasks of secrecy, authentication, and sanity. Our evaluation shows that LLMs exhibit stronger performance on existential propositions.
  
  \item We propose a solve-level coverage metric named CRoST (\textbf{C}overage \textbf{R}ate \textbf{o}f \textbf{S}olve \textbf{T}ree) for quantifying security properties similarity, and validate its rationality through both theoretical analysis and mutation-based evaluation.
  
  \item We unveil the main limitations and failure modes of mainstream LLMs when they are applied in CSPV tasks. We show that stage-point confusion in complex multi-phase protocols and substantial time overhead are two major bottlenecks for using LLMs in CSPV.
\end{itemize}

\section{Background}
\label{sec:background}

% 加入FOL的说明
\subsection{Symbolic Model}

Symbolic verification \cite{8115633} of cryptographic protocols is conducted under the Dolev-Yao adversary model.
In this adversary model, cryptographic primitives are treated as ideal black boxes, and the adversary is allowed to intercept, replay, modify, and inject messages arbitrarily, while being limited to symbolic inference rules derived from the algebraic properties of cryptographic operators \cite{10.1109/TIT.1983.1056650}.

\subsection{Security Properties}
\label{subsec:security_properties}

Cryptographic protocols are intended to guarantee security properties such as secrecy, authentication, and sanity. Secrecy captures scenarios in which sensitive data, such as session keys or private messages, should remain unavailable to the adversary \cite{10.1007/978-3-642-28641-4_2, tamarin-manual}. Authentication captures scenarios in which one party’s acceptance should correspond to a genuine prior action of its intended peer, rather than to an adversarial forgery or replay \cite{10.1007/978-3-031-17143-7_4, 596782, tamarin-manual}. In protocol analysis, sanity is often used as a basic executability or consistency property, checking that the modeled protocol can complete an intended interaction and that security claims do not hold only vacuously \cite{tamarin-manual}.

% Tamarin prover的两个部分rule和lemma都要给例子，能否用上introduction的图
% 要说明选择tamarin的理由
\subsection{Tamarin Prover}

In the symbolic verification tool Tamarin Prover \cite{Schmidt2012Formal, 6266153}, protocols are specified using multiset rewriting rules of the form: $[\textit{lhs}] \;\; -[\textit{ActionFacts}]\rightarrow \;\; [\textit{rhs}]$

% 中文注释：规则以“事实的多重集”作为状态：触发规则需要满足左侧事实；触发后（可消耗的）事实被移除并产生右侧事实；中间的 action facts 作为可观测事件追加到 trace，用于后续性质（lemma）定义。
Where $\textit{lhs}$ and $\textit{rhs}$ are multisets of facts representing the current protocol state.
A rule can fire when the facts in $\textit{lhs}$ are available; firing the rule consumes the facts on the left and produces the facts on the right.
The bracketed \textit{ActionFacts} on the arrow are not part of the state; instead, they are recorded as observable events on the execution trace and form the basis for specifying security properties as lemmas.
In Figure~\ref{fig:rule_example}, the rule \texttt{Reveal\_ltk} models long-term key compromise: it reuses a persistent key fact \texttt{!Ltk(A,ltk)}, emits the event \texttt{LtkReveal(A)}, and outputs the leaked key via \texttt{Out(ltk)}.

\begin{figure}[htb]
\centering
\begin{lstlisting}[
  basicstyle=\ttfamily\scriptsize,
  columns=fullflexible,
  keepspaces=true,
  breaklines=true,
  backgroundcolor=\color{gray!8},
]
rule Reveal_ltk:
  [ !Ltk(A,ltk) ]     // storing long-term key.
  --[ LtkReveal(A) ]->// Record the reveal event.
  [ Out(ltk) ]        // Output the leaked key.
\end{lstlisting}
\vspace{-0.6em}
\caption{Rewriting rule example.}
\label{fig:rule_example}
\end{figure}

% \begin{figure}[htb]
% \centering
% \begin{lstlisting}[basicstyle=\ttfamily\scriptsize,
%   columns=fullflexible,
%   keepspaces=true,
%   breaklines=true]
% rule Reveal_ltk:
%   [ !Ltk(A,ltk) ]
%   --[ LtkReveal(A) ]->
%   [ Out(ltk) ]
% \end{lstlisting}
% \vspace{-0.6em}
% \caption{Rewriting rule example.}
% \label{fig:rule_example}
% \end{figure}

Security properties in Tamarin are expressed as \emph{lemmas} (Figure \ref{fig:tamarin-secrecy-lemmas}), which are first-order logic (FOL) with quantification over timepoints.
Lemmas may quantify universally or existentially over timepoints and action facts, allowing the specification of both safety and reachability properties.
Tamarin distinguishes between \texttt{all-traces} lemmas, which must hold for all possible executions, and \texttt{exists-trace} lemmas, which assert the existence of a witness trace \cite{tamarin-manual}.

% 中文注释：我们选择 Tamarin 作为主要后端有两点原因：（1）相比常见符号协议验证工具，Tamarin 在建模与证明层面更“表达力强/功能完备”，既支持复杂等式理论（如 DH）、可变状态，也同时提供自动与交互式证明模式，便于处理真实协议中的复杂证明义务；（2）Tamarin 生态支持 SAPIC+ 过程演算前端，可将同一协议规格编译到不同验证后端（如 ProVerif、DeepSec），便于跨工具复核与复现。
In this work, we instantiate our evaluation on Tamarin as the target symbolic verifier for two reasons.

First, among symbolic protocol verification tools, Tamarin offers particularly strong modeling expressiveness and proof support. It can handle rich equational theories, mutable state, and provides both automatic and interactive proof modes, which is important for complex, real-world protocol models \cite{CremersTools}.
Second, Tamarin integrates the SAPIC+ process calculus~\cite{281300}, a cross-tool modeling layer that can translate SAPIC+ specifications to Tamarin, ProVerif, and DeepSec. This allows protocol models developed in a multi-verifier workflow to be instantiated as Tamarin theories for our evaluation.

\section{Methodology}
\label{sec:methodology}

\subsection{Motivation}
\label{subsec:metric_motivatioin}
% background

% 这部分background讲清楚了是否就可以不讲
% 思考如何和之前衔接，能不能加一句话
% 中文注释：在 Tamarin 中，协议的安全性质由 lemmas 给出；在我们的基准中，每个协议理论都包含一组由专家撰写的 reference lemmas，用以表达设计者意图中的安全属性（如 secrecy、authentication、accountability）。因此，我们以这组 reference lemmas 作为对照目标，在相同协议模型与相同证明配置下，对 LLM 生成的 lemmas 进行比较与评估。

In this section, we assess how closely LLM-generated lemmas align with expert-written lemmas in expressing the intended security properties of a protocol. Our benchmark provides, for each protocol theory, a reference lemma set written by human experts, which we treat as the target specification. Based on this comparison, we first derive a metric CRoST from empirical observations about proof behavior and generation difficulty, and then justify its rationale in subsection \ref{subsec:def-assump}.

% 中文直译：困难在于：Tamarin lemma 是“带量词的时序一阶逻辑公式”，其语义依赖于 action facts 及时间点顺序；而同一安全属性往往没有唯一、规范的 lemma 形式，常见地会被重写、拆分或合并成不同但语义等价的 lemma 集合，因此很难建立 lemma 的一一对应关系。图ref{fig:tamarin-secrecy-lemmas}给出了一个例子：同一个“密钥保密性”的定义既可以写成一个总 lemma Secrecy，也可以拆成两个按角色泄露条件分支的 lemmas Secrecy\_split\_A 和 Secrecy\_split\_B。
To derive the metric, a key difficulty is that a Tamarin lemma is a quantified temporal FOL formula over action facts and their ordering on traces \cite{tamarin-manual, 11334235}.
There is rarely a canonical lemma form for a given security property: One security property can be expressed in various forms.
For example, Figure~\ref{fig:tamarin-secrecy-lemmas} shows that the same secrecy property definition can be written either as a single overall lemma \texttt{Secrecy}, or equivalently split into two lemmas \texttt{Secrecy\_split\_A} and \texttt{Secrecy\_split\_B} that branch on role-specific reveal conditions. This poses a challenge for lemma evaluation metrics.

\begin{figure}[htb]
\centering
\begin{lstlisting}[
  basicstyle=\ttfamily\scriptsize,
  columns=fullflexible,
  keepspaces=true,
  breaklines=true,
  backgroundcolor=\color{gray!8},
]
// Same secrecy property, written in two forms.
lemma Secrecy:
  "All A B m #i.
     Secret(A,B,m)@#i
     ==> not (Ex #r. K(m)@#r)
       | (Ex #r. Reveal(A)@#r)
       | (Ex #r. Reveal(B)@#r)"

lemma Secrecy_split_A:
  "All A B m #i.
     Secret(A,B,m)@#i
     & not(Ex #r. Reveal(A)@#r)
     ==> not (Ex #r. K(m)@#r)
       | (Ex #r. Reveal(B)@#r)"

lemma Secrecy_split_B:
  "All A B m #i.
     Secret(A,B,m)@#i
     & not(Ex #r. Reveal(B)@#r)
     ==> not (Ex #r. K(m)@#r)
       | (Ex #r. Reveal(A)@#r)"
\end{lstlisting}
\vspace{-0.6em}
\caption{Alternative Tamarin specifications for the same secrecy property.}
\label{fig:tamarin-secrecy-lemmas}
\end{figure}

% 中文直译：在这种“非规范化 + 可重构”的特征下，传统的基于文本相似度的指标（如 BLEU）无法可靠衡量属性定义是否一致；相关研究也表明，匹配类指标往往难以反映真实语义/功能正确性。
Text-level similarity metrics such as BLEU \cite{papineni-etal-2002-bleu} are not reliable signals for whether two lemma sets define the same security property; metrics based on matching are known to correlate poorly with semantic or functional correctness in other generation settings \cite{Evtikhiev2022OutOT}.
A seemingly stronger alternative is to compare FOL components via truth tables \cite{yang-etal-2024-harnessing}. However, Tamarin lemmas involve quantification over events and timepoints. Truth-table comparison is not applicable in general.

We therefore adopt a proof-based comparison based on Tamarin's proof artifacts.
Given a protocol and a lemma, Tamarin's automated proof search produces a \emph{proof skeleton} (Figure~\ref{fig:skel-contrast}) that explicitly records the encountered \texttt{solve(...)} goals and case distinctions \cite{tamarin-manual} as tree structure.
Intuitively, these solve obligations reflect which parts of the verifier's reasoning space are exercised by a lemma specification.
This is similar to proof-based criteria in software testing, where coverage quantifies how thoroughly executions exercise a graph-structured behavior space \cite{10.5555/3155562.3155590, 11081727, 10.1145/3477579}.
From a software testing perspective, a useful coverage criterion should be measurable, reliable, and predictive. Motivated by this principle, we design CRoST as a proof-based coverage metric over Tamarin proof skeletons.

% 中文直译：我们据此定义CRoST：把目标 lemma 集合的 proof skeleton 中出现的 solve-goal 路径当作待覆盖的“测试需求”，看生成 lemma 集合的 skeleton 是否覆盖这些路径（以连续子路径匹配为准）；这一思想类比软件测试中的路径覆盖准则，用覆盖率刻画结构性探索程度。
Algorithm~\ref{alg:crost} computes \textsc{CRoST} in three steps. 
Lines~1-10 define the procedure ExtractPaths, which traverses a proof tree and records every path. 
Lines~12-17 apply this procedure to all target and generated proof trees to build the path sets $P^\star$ and $P$. 
Lines~19-23 then compare each target path against the generated paths and mark it as covered when a match is found. 
Finally, Lines~24-27 return the coverage ratio.

\begin{algorithm}[htb]
\caption{CRoST: Coverage Rate of Solve Tree.}
\label{alg:crost}
\LinesNumbered
\SetCommentSty{footnotesize}

\KwIn{Target proof trees $\mathcal{T}^\star$, generated proof trees $\mathcal{T}$}
\KwOut{CRoST score $c \in [0,1]$}

\SetKwFunction{Extract}{ExtractPaths}

\Fn{\Extract{$node, prefix$}}{
    $P \gets \emptyset$\;
    \uIf{$node$ is a \texttt{solve} node}{
        $prefix' \gets prefix \,\|\, \langle label(node)\rangle$\;
        $P \gets P \cup \{prefix'\}$\;
    }
    \Else{
        $prefix' \gets prefix$\;
    }
    \ForEach{$child \in Children(node)$}{
        $P \gets P \cup \Extract(child, prefix')$\;
    }
    \Return $P$\;
}

\BlankLine
$P^\star \gets \emptyset,\; P \gets \emptyset$\;

\tcp*[l]{Extract all root-to-\texttt{solve} paths from target lemmas.}
\ForEach{$tree \in \mathcal{T}^\star$}{
    \ForEach{$root \in Roots(tree)$}{
        $P^\star \gets P^\star \cup \Extract(root, \langle\ \rangle)$\;
    }
}

\tcp*[l]{Extract all root-to-\texttt{solve} paths from generated lemmas.}
\ForEach{$tree \in \mathcal{T}$}{
    \ForEach{$root \in Roots(tree)$}{
        $P \gets P \cup \Extract(root, \langle\ \rangle)$\;
    }
}

$covered \gets \emptyset$\;
\tcp*[l]{Compute covered target paths using subsequence matching ($p \preceq q$).}
\ForEach{$p \in P^\star$}{
    \ForEach{$q \in P$}{
        \If{$p \preceq q$}{
            $covered \gets covered \cup \{p\}$\;
            \textbf{break}\;
        }
    }
}

\uIf{$|P^\star| = 0$}{
    \Return $c \gets 0$\;
}
\Else{
    \Return $c \gets \frac{|covered|}{|P^\star|}$\;
}
\end{algorithm}

% 中文直译：我们的评估指标 CRoST 的灵感来自软件测试中的结构覆盖准则：在控制流图/程序图上，以“被覆盖的节点/边/路径比例”衡量测试集对行为空间的探索程度；但我们的场景并不是执行程序来触发分支，而是运行 Tamarin 的自动证明搜索来触发 solve 义务与分支，因此覆盖对象从“程序路径”变为“证明骨架中的 solve 树路径”。由于覆盖率在不同场景下与有效性之间并非天然等价（需要具体论证），我们接下来将从理论上建立 CRoST 与“安全属性定义/检验能力”的关系，并通过突变测试在实证上验证其合理性。
% \noindent
% \textbf{From software-test coverage to CRoST.}
% Our metric CRoST is inspired by structural coverage criteria in software testing, where the adequacy of a test suite is quantified by the fraction of graph-structured requirements (e.g., nodes/edges/paths in a control-flow graph) that are exercised \cite{ammann2008introduction,zhu1997testadequacy}.
% In our algorithm, however, there is no program execution and no concrete test inputs; instead, a generated lemma set serves as the stimulus that drives Tamarin's heuristic proof search, and the resulting proof skeleton exposes which solve obligations and branch structures are explored.
% Because coverage is only a proxy signal whose connection to effectiveness depends on the underlying process and must be justified \cite{inozemtseva2014coverage}, we next establish a principled link between CRoST and a generator's capability to define/check security properties, and we validate this link quantitatively via mutation-based evaluation.
% We then prove a relationship between the \textsc{CRoST} metric and security property similarity.

Prior work has shown that proof-based coverage metrics can serve as a practical alternative to mutation-based coverage in formal verification, while yielding meaningful information about specification adequacy~\cite{10.5555/3155562.3155590}. In the symbolic verification of cryptographic protocols scenario, we go one step further by studying whether coverage also correlates with the similarity between generated lemmas and target lemmas.
% 这里的承接很重要，需要说明我们为什么要理论证明其关系，但还没想好咋写
% 这里可以写cite{10.5555/3155562.3155590}中证明了软件测试的proof-based方法和mutation test之间的关系，但是在该场景下我们希望进一步证明覆盖率和形式化安全属性表述相似性的相关性

\paragraph*{Goal and proof idea}
Our goal is to evaluate an LLM-generated lemma set by how effectively it reproduces the proof obligations required to establish a target security property in Tamarin.
We use \emph{solve-level coverage} to measure how much of the target lemma's proof skeleton is covered by the generated lemma.
In the following, we formalize (i) capability as covering a minimal sufficient set of solve obligations, i.e., a minimal generator of the sufficient-set family, and (ii) how increasing coverage raises the probability of covering at least one such minimal generator.
We further quantify this relationship under a simple random-overlap model.

\subsection{Definitions and Assumptions}
\label{subsec:def-assump}

\paragraph*{Tool-grounded observables under a fixed proving configuration}
We fix a protocol theory $P$ and a proving configuration $\kappa$ (Tamarin version, command-line flags, heuristic rules, and resource bounds).
For any lemma $L$, let $\Skel(L):=\mathsf{Autoprove}(P,L;\kappa)$ denote the proof skeleton produced by Tamarin under $\kappa$.
We view $\Skel(L)$ as a finite rooted structure whose nodes include \texttt{solve($c$)} goals and whose edges reflect explicit case distinctions in the skeleton output.

\begin{definition}[Atomic solve-item and solve footprint]
Each \texttt{solve($c$)} node contains a goal/constraint term $c$.
We apply a normalization function $\Norm(\cdot)$ that (i) $\alpha$-renames bound variables, (ii) erases proof-local indices/counters, and (iii) canonicalizes commutative constructs.
A normalized \emph{atomic solve-item} is $u:=\Norm(c)$.
Let $\mathcal{U}$ be the finite universe of normalized solve-items observed under the fixed configuration $\kappa$.
The \emph{solve footprint} of lemma $L$ is
\begin{smallequation}
S(L)
:=
\{\, \Norm(c)\in\mathcal{U}
\mid
\text{\texttt{solve($c$)} occurs in }\Skel(L)
\,\}
\subseteq \mathcal{U}.
\end{smallequation}
\end{definition}

\paragraph*{Proof-skeleton order over solve occurrences}
After normalization, each \texttt{solve($c$)} node is labeled by an atomic solve-item $u=\Norm(c)$.
For $x,y \in \mathcal{U}$, define
\begin{smallequation}
x \preceq_\kappa y
\quad \text{iff} \quad
x=y \text{ or } x \text{ is a descendant of } y
\end{smallequation}
This relation captures the solve-reduction structure induced by the proof tree: a node is below another precisely when it arises from reducing subgoals generated by it.

\paragraph*{Deterministic proof search policy}
Under fixed $\kappa$, Tamarin's automated proof search uses a fixed heuristic ranking to prioritize open constraints and proof methods.
Operationally, we treat $\Skel(L)$ as the deterministic unfolding of the proof-search policy induced by $\kappa$, yielding a reproducible solve/case structure for a given $(P,L)$.

\paragraph*{(B) Lemma-set abstraction for security properties}
In practice, a target security property may be specified by a set of Tamarin lemmas rather than a single canonical lemma.
Likewise, an LLM may generate multiple lemmas that jointly describe parts of the same intended property.
Therefore, we treat a lemma set as the unit of comparison: multiple lemmas are abstracted into a property-level solve footprint, obtained by aggregating their normalized solve footprints.

\begin{definition}[solve-level coverage]
\label{def:lemma-set-coverage}
For a finite lemma set $\mathcal{L}$ under the same protocol $P$ and proving configuration $\kappa$, we lift the solve footprint from individual lemmas to the lemma-set level by union:
\begin{smallequation}
S(\mathcal{L})
:=
\bigcup_{L\in\mathcal{L}} S(L)
\subseteq \mathcal{U}.
\end{smallequation}
Given a target lemma set $\mathcal{L}^{\star}$ and a candidate generated lemma set $\mathcal{L}$, let
\begin{smallequation}
S^{\star}:=S(\mathcal{L}^{\star}),
\qquad
S:=S(\mathcal{L}),
\qquad
H:=S\cap S^{\star}.
\end{smallequation}
We write
\begin{smallequation}
n:=|S^{\star}|,
\qquad
k:=|H|.
\end{smallequation}
The solve-level coverage is defined as
\begin{smallequation}
\Cover(\mathcal{L};\mathcal{L}^{\star})
:=
\frac{|S(\mathcal{L})\cap S(\mathcal{L}^{\star})|}
{|S(\mathcal{L}^{\star})|}
=
\frac{k}{n}
=
c\in[0,1].
\end{smallequation}
\end{definition}

\begin{definition}[Minimal generators of the sufficient-set family]
Under fixed $P, \kappa$, let
$\mathcal{F}\subseteq 2^{\mathcal{U}}$ be the upward-closed family of solve-item sets sufficient for the target outcome.
The \emph{minimal generators} of $\mathcal{F}$ are the inclusion-minimal elements of $\mathcal{F}$:
\begin{smallequation}
\Min(\mathcal{F})
:=
\{\, C\in\mathcal{F}
\mid
\nexists C'\subsetneq C:\ C'\in\mathcal{F}
\,\}.
\end{smallequation}
We write $\Min(\mathcal{F})=\{C_1,\dots,C_M\}$.
Equivalently, for any $X\subseteq\mathcal{U}$,
\begin{smallequation}
X\in\mathcal{F}
\quad\Longleftrightarrow\quad
\exists j\in\{1,\dots,M\}: C_j\subseteq X.
\end{smallequation}
Each $C_j$ represents a minimal sufficient set of normalized solve-items for the target outcome, i.e., no proper subset of $C_j$ is sufficient.
Typically $|C_j|>1$ for all-traces properties, while exists-trace properties may admit $|C_j|=1$.
Figure~\ref{fig:skel-contrast} contrasts two typical skeleton shapes: exists-trace lemmas succeed by finding one witness branch, whereas all-traces lemmas require discharging all case splits.
\end{definition}

\begin{figure*}[htb]
  \centering
  \subfloat[\textbf{exists-trace: one successful branch suffices.}\label{fig:skel-exists}]{%
    \includegraphics[width=0.49\textwidth]{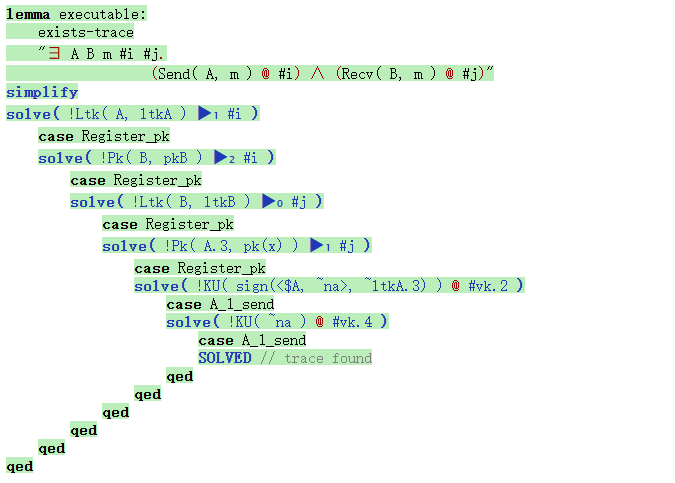}}
  \hfill
  \subfloat[\textbf{all-traces: all branches must be discharged.}\label{fig:skel-all}]{%
    \includegraphics[width=0.49\textwidth]{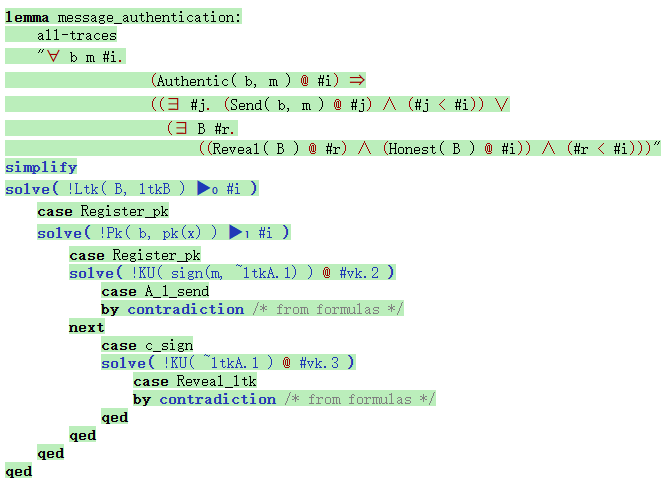}}
  \caption{\textbf{Contrasting proof skeleton for exists-trace vs all-traces lemmas in Tamarin.}}
  \label{fig:skel-contrast}
\end{figure*}

\begin{definition}[Capability of a lemma generator]
Let $\mathcal{G}$ be a lemma generator pipeline. Given a protocol context, $\mathcal{G}$ induces a distribution over finite lemma sets.
The generation capability of $\mathcal{G}$ is the probability that the generated lemma set covers at least one minimal generator:
\begin{smallequation}
\begin{split}
\Ability(\mathcal{G};\mathcal{F})
&:= \Pr_{\mathcal{L}\sim\mathcal{G}}
   \bigl[\, S(\mathcal{L})\in\mathcal{F}\,\bigr] \\
&= \Pr_{\mathcal{L}\sim\mathcal{G}}
   \bigl[\,\exists j\in\{1,\dots,M\}:
   C_j\subseteq S(\mathcal{L})\,\bigr].
\end{split}
\end{smallequation}
\end{definition}
$\mathcal{L}\sim\mathcal{G}$ means $\mathcal{L}$ is sampled from distribution induced by $\mathcal{G}$.

\paragraph*{Sanity check}
If $\Cover(\mathcal{L};\mathcal{L}^\star)=1$, then by definition
$S(\mathcal{L}^\star) \subseteq S(\mathcal{L})$.
If the target lemma set is sufficient, i.e.,
$S(\mathcal{L}^\star) \in \mathcal{F}$, then there exists some minimal generator
$C_j \subseteq S(\mathcal{L}^\star)$.
Consequently, $C_j \subseteq S(\mathcal{L})$, which implies
$S(\mathcal{L}) \in \mathcal{F}$.

\paragraph*{(C) A tractable random model}
To quantitatively relate solve-level coverage to generation capability, we adopt a simplified conditional random model.

\begin{assumption}[Random overlap conditional on coverage]
\label{assm:rand-overlap}
Fix a target lemma set $\mathcal{L}^\star$.
In practice, the generated footprint $S(\mathcal{L})$ may be correlated with $S(\mathcal{L}^\star)$ because both are induced by the same protocol theory and proving configuration.
For a tractable analysis, however, we do not assume any favorable target-specific alignment.
Conditioned only on the overlap size $|H|=k$, we model $H$ as a uniformly random $k$-subset of $S(\mathcal{L}^\star)$, sampled without replacement.
\end{assumption}

\subsection{Predecessor Hits and Minimal-Generator Continuation}
\label{subsec:predecessors}

\paragraph*{Minimal-generator elements}
For each $C_j\in\Min(\mathcal{F})$, let
\begin{smallequation}
C_j=\{u_{j,1},u_{j,2},\dots,u_{j,m_j}\}\subseteq \mathcal{U},
\qquad m_j:=|C_j|.
\end{smallequation}

\paragraph*{Predecessor sets}
Using the proof-skeleton order $\preceq_\kappa$, for each $u_{j,k}\in C_j$ let its target-derived predecessor set be
\begin{smallequation}
\Pred_{\kappa}(u_{j,k})
:=
\{\, g\in S^\star \mid u_{j,k}\preceq_\kappa g,\ g\neq u_{j,k}\,\}.
\end{smallequation}
Intuitively, $\Pred_{\kappa}(u_{j,k})$ contains target solve-items whose reduction can lead to the solve obligation $u_{j,k}$.

\begin{definition}[Predecessor-hit event]
\label{def:pred-hit}
For a generated lemma set $\mathcal{L}$, let
\begin{smallequation}
\Hit_{j,k}(\mathcal{L})=1
\quad\text{iff}\quad
H\cap\Pred_{\kappa}(u_{j,k})\neq\emptyset .
\end{smallequation}
That is, the generated lemma set hits at least one target predecessor of the minimal-generator element $u_{j,k}$.
\end{definition}

\begin{estimation}[Predecessor-to-minimal-generator continuation probability]
\label{est:rho}
For each $(j,k)$, let
\begin{smallequation}
\rho_{j,k}
:=
\Pr_{\mathcal{L}\sim\mathcal{G}}
\!\Big[\, C_j \subseteq H \ \Big|\ \Hit_{j,k}(\mathcal{L})=1 \,\Big].
\end{smallequation}
In our experiments under a fixed $\kappa$, we estimate $\rho_{j,k}$ from repeated runs of $\mathcal{G}$; for example,
the average continuation probability is $\widehat{\rho}\approx 0.513$, i.e., conditioned on a predecessor hit, the full
minimal generator is covered about $51.2\%$ of the time.
\end{estimation}

\paragraph*{Empirical predecessor-to-generator continuation probability}
A predecessor hit does not fix the remaining proof-search trajectory: the subsequent goal ordering and case splits may still diverge from the target skeleton.
Therefore, even under a fixed configuration $\kappa$, hitting a predecessor does not guarantee covering the whole minimal generator.
The central problem in this proof model can be viewed as analogous to probabilistic reachability in uncertain graphs~\cite{10.14778/3324301.3324304}: a predecessor hit indicates a possible support path, while $\rho$ estimates the probability that this path continues to full minimal-generator coverage.

% =========================
% Exists-trace special case
% =========================
\subsection{Exists-trace Situation}
\label{subsec:exists-singleton}

We first focus on the simplest \texttt{exists-trace} situation, where the target outcome can be witnessed by reaching one solve-item from a singleton minimal generator.
Let
\begin{smallequation}
W:=\{\,u\in\mathcal{U}\mid \{u\}\in\Min(\mathcal{F})\,\}
\end{smallequation}
be the set of solve-items that form singleton minimal generators.
For example, Figure~\ref{fig:exists-trace-singleton} illustrates the case where one such generator is $C_1=\{u_1\}$.

\begin{figure}[htb]
  \centering
  \includegraphics[width=0.75\linewidth]{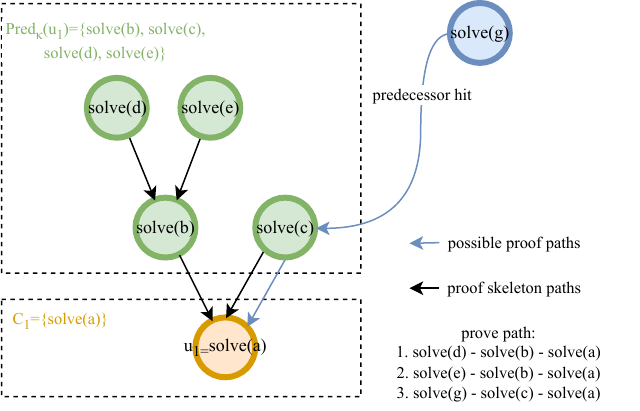}
  \caption{\textbf{Exists-trace witness shape with a singleton minimal generator.}
  The minimal generator is $C_1=\{\texttt{solve}(a)\}$; reaching this solve-item along one witness branch suffices to establish the target outcome.}
  \label{fig:exists-trace-singleton}
\end{figure}

\begin{lemma}[Capability equals hitting a singleton minimal generator]
\label{lem:exists-singleton-capability}
In the singleton \texttt{exists-trace} setting, the capability of $\mathcal{G}$ is
\begin{smallequation}
\Ability(\mathcal{G};\mathcal{F})
=
\Pr_{\mathcal{L}\sim\mathcal{G}}
\!\big[\,\exists u\in W:\ u\in S(\mathcal{L})\,\big].
\end{smallequation}
\end{lemma}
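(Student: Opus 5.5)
The plan is to unfold the definition of $\Ability(\mathcal{G};\mathcal{F})$ and show that, in the singleton \texttt{exists-trace} setting, the event it measures coincides with the event $\{\exists u\in W: u\in S(\mathcal{L})\}$. By definition,
\begin{smallequation}
\Ability(\mathcal{G};\mathcal{F})=\Pr_{\mathcal{L}\sim\mathcal{G}}\bigl[\exists j\in\{1,\dots,M\}: C_j\subseteq S(\mathcal{L})\bigr].
\end{smallequation}
Since both probabilities are taken over the same distribution $\mathcal{L}\sim\mathcal{G}$, it suffices to prove a pointwise set identity: for every finite lemma set $\mathcal{L}$, $S(\mathcal{L})\in\mathcal{F}$ if and only if $S(\mathcal{L})\cap W\neq\emptyset$. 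Equality of probabilities then follows immediately.

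First we must state precisely what the \emph{singleton setting} means. We read it as the hypothesis that every minimal generator is a singleton, i.e.\ $|C_j|=1$ for all $j$. Equivalently, $\Min(\mathcal{F})=\{\{u\}: u\in W\}$, where $W$ is as defined above. Under this reading, $W$ is in bijection with $\{C_1,\dots,C_M\}$ via $u\mapsto\{u\}$.

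For the forward direction, suppose $S(\mathcal{L})\in\mathcal{F}$. By the characterisation of $\mathcal{F}$ via its minimal generators, some $C_j\subseteq S(\mathcal{L})$. Writing $C_j=\{u\}$, we get $u\in W$ and $u\in S(\mathcal{L})$. For the backward direction, suppose $u\in W\cap S(\mathcal{L})$. Then $\{u\}\in\Min(\mathcal{F})$ and $\{u\}\subseteq S(\mathcal{L})$, so $S(\mathcal{L})\in\mathcal{F}$ by the same equivalence. This direction also follows directly from upward closure of $\mathcal{F}$.

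The only real obstacle is interpretive: justifying that \emph{every} minimal generator is a singleton, rather than merely that some are. If $\Min(\mathcal{F})$ also contained non-singleton generators, the forward direction would fail: the left-hand side would then dominate the right-hand side and could strictly exceed it. I would therefore make the hypothesis explicit, as the lemma's setting suggests (one witness branch suffices, see Figure~\ref{fig:exists-trace-singleton}). I would also remark that without it one still obtains the inequality $\Ability(\mathcal{G};\mathcal{F})\ge\Pr[\exists u\in W: u\in S(\mathcal{L})]$. Finiteness of $\mathcal{U}$ guarantees that $\Min(\mathcal{F})$ is well-defined and finite, so no measurability issues arise.
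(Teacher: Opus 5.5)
Your argument is correct and is essentially the paper's own. The paper gives no separate proof of this lemma: it treats it as an immediate unfolding of $\Ability$ through the minimal-generator characterization of $\mathcal{F}$ and the definition of $W$, which is exactly your pointwise identity $S(\mathcal{L})\in\mathcal{F}\iff S(\mathcal{L})\cap W\neq\emptyset$. Your explicit hypothesis that every $C_j$ is a singleton is the right reading of the ``singleton \texttt{exists-trace} setting'', since the paper's wording only makes the backward direction evident, and your remark that without it only $\geq$ holds is accurate.
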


\begin{lemma}[Closed-form capability from coverage in the singleton \texttt{exists-trace} case]
\label{lem:exists-singleton-closedform}
Define the direct-hit event
\begin{smallequation}
A:=\{\,H\cap W\neq\emptyset\,\},
\end{smallequation}
i.e., the generated footprint directly covers at least one singleton minimal-generator element.

Under Assumption~\ref{assm:rand-overlap}, with $k$ treated as a fixed overlap size, we write $\Pr_k[\cdot]$
for the probability induced by drawing $H$ uniformly from all $k$-subsets of $S^\star$.
Let $r:=|W|$. Then
\begin{smallequation}
\Pr_k[A]
=
1-
\frac{\binom{n-r}{k}}{\binom{n}{k}}.
\end{smallequation}

Let $\widehat{\rho}$ denote the empirical continuation probability that, when no singleton minimal-generator element is directly hit, predecessor-supported proof search still continues to full minimal-generator coverage under fixed configuration $\kappa$.
Then the success probability is modeled as
\begin{smallequation}
\begin{aligned}
\Ability(\mathcal{G};\mathcal{F})
&=
\Pr_k[A] + \bigl(1-\Pr_k[A]\bigr)\widehat{\rho} \\
&=
\widehat{\rho} + (1-\widehat{\rho})
\left(
1-
\frac{\binom{n-r}{k}}{\binom{n}{k}}
\right).
\end{aligned}
\end{smallequation}
For the special case $r=1$, this reduces to
\begin{smallequation}
\Ability(\mathcal{G};\mathcal{F})
=
\widehat{\rho} + (1-\widehat{\rho})\,c.
\end{smallequation}
\end{lemma}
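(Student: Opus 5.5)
The plan has three parts: a hypergeometric count for $\Pr_k[A]$, a two-case decomposition of the success event, and a specialization to $r=1$. I take $W\subseteq S^\star$, since a singleton minimal generator for the target outcome must appear in the target footprint; otherwise $H\cap W$ could only meet $W\cap S^\star$, and $r$ would have to be replaced by $|W\cap S^\star|$.

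\textbf{Computing $\Pr_k[A]$.} Under Assumption~\ref{assm:rand-overlap}, $H$ is uniform over the $\binom{n}{k}$ subsets of $S^\star$ of size $k$. Work with the complement: $A^c=\{H\cap W=\emptyset\}$ holds exactly when $H$ is a $k$-subset of $S^\star\setminus W$, which has $n-r$ elements. There are $\binom{n-r}{k}$ such subsets, with the usual convention that this is $0$ when $k>n-r$. Dividing by $\binom{n}{k}$ and taking complements gives the first display.

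\textbf{Decomposing the success event.} By Lemma~\ref{lem:exists-singleton-capability}, success means some $u\in W$ lies in $S(\mathcal{L})$. Apply the law of total probability over $A$ and $A^c$.
\begin{itemize}
\item On $A$ some $u\in W$ lies in $H\subseteq S(\mathcal{L})$, so success is certain and contributes $\Pr_k[A]\cdot 1$.
\item On $A^c$ the paper's modelling choice applies: conditional success is the predecessor-supported continuation probability $\widehat{\rho}$ of Estimation~\ref{est:rho}. This contributes $(1-\Pr_k[A])\widehat{\rho}$.
\end{itemize}
Rearranging $\Pr_k[A]+(1-\Pr_k[A])\widehat{\rho}=\widehat{\rho}+(1-\widehat{\rho})\Pr_k[A]$ yields the second line.

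\textbf{The case $r=1$.} Here $\binom{n-1}{k}/\binom{n}{k}=(n-k)/n$. So $\Pr_k[A]=k/n=c$ by Definition~\ref{def:lemma-set-coverage}, and substituting gives the final formula.

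\textbf{Main obstacle.} The hardest step is justifying the conditional term on $A^c$. Estimation~\ref{est:rho} conditions on a predecessor hit, not on the event $A^c$. Identifying the two requires treating $\widehat{\rho}$ as a modelling constant, as the statement says (``is modeled as''), rather than deriving it. I would state this identification explicitly as the modelling step. The combinatorial parts are routine.
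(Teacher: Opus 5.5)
Your hypergeometric count for $\Pr_k[A]$ is correct. So are the rearrangement $\Pr_k[A]+(1-\Pr_k[A])\widehat{\rho}=\widehat{\rho}+(1-\widehat{\rho})\Pr_k[A]$ and the reduction $\binom{n-1}{k}/\binom{n}{k}=(n-k)/n$, which gives $\Pr_k[A]=c$. These are the only parts of the statement that can actually be derived; the paper gives no proof and simply posits the second display (``is modeled as'').

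The gap is your justification of the $\widehat{\rho}$ term by the law of total probability over $A$ and $A^c$. It is a sharper problem than the conditioning mismatch you flag at the end. You make two choices:
\begin{itemize}
\item You work under $W\subseteq S^\star$. This is needed for the count with $r=|W|$, but it does not follow from the definitions: sufficiency of $S^\star$ only puts \emph{some} minimal generator inside $S^\star$.
\item You take success to be the event of Lemma~\ref{lem:exists-singleton-capability}.
\end{itemize}
Since $H=S(\mathcal{L})\cap S^\star$, these together give $\{\exists u\in W:\ u\in S(\mathcal{L})\}=\{H\cap W\neq\emptyset\}=A$ exactly. So the conditional success probability on $A^c$ is $0$, not $\widehat{\rho}$. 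Carried out correctly, your decomposition returns $\Pr_k[A]$ (hence $c$ when $r=1$), and the displayed formula agrees with it only when $\widehat{\rho}=0$.

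Dropping $W\subseteq S^\star$ does not repair this. Then $r$ must become $|W\cap S^\star|$, and successes on $A^c$ come from $S(\mathcal{L})$ meeting $W\setminus S^\star$. Assumption~\ref{assm:rand-overlap} only describes $H$, so it does not determine the probability of that event.

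Consequently, the $\widehat{\rho}$ term cannot be obtained by identifying it with a conditional probability of the success event. It must be introduced as a separate modelling postulate on top of Lemma~\ref{lem:exists-singleton-capability}. In effect, the postulate credits predecessor-supported runs that the footprint-based success event counts as failures. Your write-up should present it that way rather than as an instance of total probability. With that change, the remaining steps go through exactly as you wrote them.

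Note also that the right-hand side is computed conditionally on $|H|=k$, whereas $\Ability(\mathcal{G};\mathcal{F})$ is defined unconditionally over $\mathcal{L}\sim\mathcal{G}$. The identity should therefore be read conditionally on the overlap size, as the phrase ``$k$ treated as a fixed overlap size'' intends.
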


\noindent
\textbf{Interpretation.}
In the singleton \texttt{exists-trace} setting, directly covering any singleton minimal-generator element is sufficient for success.
If the generated footprint instead only reaches predecessor-supported items, $\widehat{\rho}$ models the probability that this support continues to full minimal-generator coverage.
When there is only one singleton minimal generator, capability increases linearly with coverage $c$.

% =========================
% All-traces special case
% =========================
\subsection{All-traces situation}
\label{subsec:alltraces-multi}

We now consider the typical \texttt{all-traces} situation, where the target outcome may admit
multiple minimal generators and each minimal generator may contain multiple solve-items.

\paragraph*{Mixed direct hits and predecessor hits}
For each element $u_{j,t}\in C_j$, define its support set as
\[
X_{j,t}
:=
\{u_{j,t}\}\cup \operatorname{Pred}_{\kappa}(u_{j,t})
\subseteq S^\star,
\qquad
s_{j,t}:=|X_{j,t}|.
\]
We say that $u_{j,t}$ is supported by the generated footprint if $H$ hits either the element itself
or one of its target-derived predecessors:
\[
E_{j,t}:=\{\,H\cap X_{j,t}\neq\emptyset\,\}.
\]
Accordingly, a minimal generator $C_j$ is predecessor-supported if all of its elements are supported:
\[
E_j:=\bigcap_{t=1}^{m_j}E_{j,t}.
\]
This definition allows mixed evidence: some elements of $C_j$ may be directly covered by $H$, while others may only be supported through predecessor hits.

\paragraph*{Continuation probability for an all-traces minimal generator}
Even when $E_j$ holds, the remaining proof-search trajectory may still diverge from the target skeleton.
We therefore define a minimal-generator-level continuation probability:
\begin{smallequation}
\label{eq:rho-all}
\rho_j
:=
\Pr_{\mathcal{L}\sim\mathcal{G}}
\!\Big[\, C_j\subseteq S(\mathcal{L}) \ \Big|\ E_j \,\Big]
\in[0,1].
\end{smallequation}

\begin{lemma}[Capability lower bound via predecessor-supported minimal generators]
\label{lem:alltraces-ability-lb}
For any $j\in\{1,\dots,M\}$,
\[
\Ability(\mathcal{G};\mathcal{F})
\ge
\Pr_{\mathcal{L}\sim\mathcal{G}}\!\big[\, C_j\subseteq S(\mathcal{L})\,\big]
\ge
\rho_j\cdot
\Pr_{\mathcal{L}\sim\mathcal{G}}\!\big[\,E_j\,\big].
\]
Consequently,
\[
\Ability(\mathcal{G};\mathcal{F})
\ge
\max_{j\in[M]}
\rho_j\cdot
\Pr_{\mathcal{L}\sim\mathcal{G}}\!\big[\,E_j\,\big].
\]
\end{lemma}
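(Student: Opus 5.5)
The plan is to prove the two inequalities of Lemma~\ref{lem:alltraces-ability-lb} separately, using only the definition of $\Ability$, the minimal-generator characterization of $\mathcal{F}$, and the definition of $\rho_j$ in \eqref{eq:rho-all}. The consequence then follows by maximizing over $j$.

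\textbf{First inequality.} Fix $j\in\{1,\dots,M\}$. By the equivalent description of $\mathcal{F}$ through $\Min(\mathcal{F})$, the event $\{C_j\subseteq S(\mathcal{L})\}$ is contained in the event $\{\exists j':\ C_{j'}\subseteq S(\mathcal{L})\}=\{S(\mathcal{L})\in\mathcal{F}\}$. Monotonicity of probability under $\mathcal{L}\sim\mathcal{G}$ then gives
\[
\Ability(\mathcal{G};\mathcal{F})\ge\Pr_{\mathcal{L}\sim\mathcal{G}}\!\big[C_j\subseteq S(\mathcal{L})\big].
\]

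\textbf{Second inequality.} Intersect with the support event:
\[
\Pr\!\big[C_j\subseteq S(\mathcal{L})\big]\ge\Pr\!\big[C_j\subseteq S(\mathcal{L}),\,E_j\big]=\Pr\!\big[C_j\subseteq S(\mathcal{L})\,\big|\,E_j\big]\cdot\Pr[E_j]=\rho_j\cdot\Pr[E_j].
\]
The middle equality is the chain rule for conditional probability and requires $\Pr[E_j]>0$, since otherwise $\rho_j$ is not defined. I will treat this degenerate case separately. If $\Pr[E_j]=0$, the right-hand side is $\rho_j\cdot 0=0$ for any convention assigning $\rho_j\in[0,1]$, so the bound is trivial because probabilities are nonnegative. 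Keeping this case honest is the only delicate point; the rest is elementary.

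\textbf{Consequence.} The chain $\Ability\ge\rho_j\Pr[E_j]$ holds for every $j\in[M]$, and $\Ability$ does not depend on $j$. Hence $\Ability$ is an upper bound for the finite set $\{\rho_j\Pr[E_j]\}_{j\in[M]}$, which gives the stated $\max$ bound. Two remarks are worth adding. No use of Assumption~\ref{assm:rand-overlap} is needed, because the bound holds for any generator distribution. The event $E_j$ is defined through $H\subseteq S(\mathcal{L})$, so it is a genuine event on the same probability space as $\{C_j\subseteq S(\mathcal{L})\}$, which justifies the conditioning.
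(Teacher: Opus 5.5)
Your proof is correct and matches the paper's argument: event containment gives the first inequality, and conditioning on $E_j$ gives the second. The paper writes out the total-probability decomposition and drops the nonnegative $\neg E_j$ term, which is equivalent to your intersection step; your explicit handling of the degenerate case $\Pr[E_j]=0$ is a small point the paper omits.
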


\begin{proof}
The first inequality holds because
\[
\Ability(\mathcal{G};\mathcal{F})
=
\Pr_{\mathcal{L}\sim\mathcal{G}}
\bigl[\exists j':\, C_{j'}\subseteq S(\mathcal{L})\bigr],
\]
which dominates any fixed witness $j$.
For the second inequality, write
\begin{smallequation}
\begin{split}
\Pr[C_j\subseteq S(\mathcal{L})]
&=
\Pr[C_j\subseteq S(\mathcal{L})\mid E_j]\Pr[E_j] \\
&\quad+
\Pr[C_j\subseteq S(\mathcal{L})\mid \neg E_j]\Pr[\neg E_j] \\
&\ge
\rho_j\Pr[E_j],
\end{split}
\end{smallequation}
using the definition of $\rho_j$ in~\eqref{eq:rho-all} and non-negativity of probabilities.
\end{proof}

\paragraph*{A computable lower bound for $\Pr[E_j]$ under random overlap}
Under Assumption~\ref{assm:rand-overlap}, $H$ is a uniformly random $k$-subset of $S^\star$.
For each element, define the miss event
\[
F_{j,t}:=\{\,H\cap X_{j,t}=\emptyset\,\}.
\]
Then $E_j=\bigcap_t\neg F_{j,t}$ and hence $\neg E_j=\bigcup_t F_{j,t}$.
By a union bound,
\begin{smallequation}
\label{eq:Ej-unionbound}
\Pr[E_j]\ge 1-\sum_{t=1}^{m_j}\Pr[F_{j,t}].
\end{smallequation}
Moreover, each miss probability has a hypergeometric closed form:
\begin{smallequation}
\label{eq:Fjt-exact}
\Pr[F_{j,t}]
=
\frac{\binom{n-s_{j,t}}{k}}{\binom{n}{k}},
\end{smallequation}
with the convention that $\binom{a}{k}=0$ if $a<k$.
Plugging~\eqref{eq:Fjt-exact} into~\eqref{eq:Ej-unionbound} yields
\begin{smallequation}
\label{eq:Ej-lb-final}
\Pr[E_j]
\ge
1-\sum_{t=1}^{m_j}
\frac{\binom{n-s_{j,t}}{k}}{\binom{n}{k}}.
\end{smallequation}

\paragraph*{Coverage-to-capability relation}
Combining Lemma~\ref{lem:alltraces-ability-lb} and~\eqref{eq:Ej-lb-final}, and substituting $k=cn$,
we obtain the following monotone lower bound:
\begin{smallequation}
\label{eq:ability-vs-cover-alltraces}
\Ability(\mathcal{G};\mathcal{F})
\ge
\max_{j\in[M]}
\left\{
\rho_j
\left(
1-\sum_{t=1}^{m_j}
\frac{\binom{n-s_{j,t}}{c\cdot n}}{\binom{n}{c\cdot n}}
\right)
\right\}.
\end{smallequation}

\noindent
\textbf{Interpretation.}
The lower bound is monotone in the solve-level coverage
$c=\Cover(\mathcal{L};\mathcal{L}^\star)$.
However, the support-set sizes $s_{j,t}$ depend on the target proof-skeleton structure, so the bound cannot in general be reduced to a closed-form expression purely in terms of $c$.

When analyzing protocol-level coverage over a large lemma set, it is natural to have
$n=|S^\star|\gg s_{j,t}$.
In this regime, with $k=cn$, the combinatorial term admits the approximation
\begin{smallequation}
\label{eq:comb-approx-large-n}
\frac{\binom{n-s_{j,t}}{k}}{\binom{n}{k}}
\approx
\left(1-\frac{k}{n}\right)^{s_{j,t}}
=
(1-c)^{s_{j,t}}.
\end{smallequation}
Substituting~\eqref{eq:comb-approx-large-n} into Eq.~\eqref{eq:ability-vs-cover-alltraces} yields the following coverage-driven shape:
\begin{smallequation}
\label{eq:ability-vs-cover-alltraces-coverage-driven}
\Ability(\mathcal{G};\mathcal{F})
\gtrsim
\max_{j\in[M]}
\left\{
\rho_j
\left(
1-\sum_{t=1}^{m_j}(1-c)^{s_{j,t}}
\right)
\right\}.
\end{smallequation}
This approximation suggests that capability increases with coverage in a concave manner through the terms
$1-(1-c)^{s_{j,t}}$, and that larger support sets lead to a faster rise as $c$ increases.

\section{Experiment}
\label{sec:experiment}

\subsection{Dataset}
\label{sec:dataset}
% 中文直译：我们的基准数据集来源于 AutoSM（ICSE'25），其中包含协议的形式化模型与对应的 Tamarin lemmas；我们使用其 Tamarin 部分作为评估对象，并将仓库链接放在脚注中以便复现。
Our dataset is derived from \textsc{AutoSM}\footnote{\url{https://github.com/zerrymore/AutoSM}} \cite{10.1109/ICSE55347.2025.00197}, which include 18 protocols and 90 human-expert-written lemmas from different cryptography scenarios. We have added 54 mutation protocols based on this dataset, which will be explained further in section \ref{sec:results}.

% 中文直译：对每个协议，数据集给出一组专家 lemma 来刻画安全属性（如 secrecy、authentication、sanity）；在后续实验中我们把这些 reference lemmas 作为目标属性定义，用以评估模型生成 lemmas 的质量。
For each protocol theory $P$, the dataset includes a reference lemma set that specifies intended security properties. The properties can be classified by secrecy, authentication, and sanity.

\subsection{Results}
\label{sec:results}
To answer the two research questions raised in the introduction (section \ref{sec:introduction}), we organize the evaluation around the following five sub-questions.

\textbf{RQ1.} How well do LLMs produce syntactically valid specifications in CSPV? 

\textbf{RQ2.} How effective is CRoST as an evaluation metric? 

\textbf{RQ3.} What are the capability boundaries of LLMs in CSPV? 

\textbf{RQ4.} How do LLMs perform across different task types in CSPV? 

\textbf{RQ5.} What is the computational cost of using LLMs in CSPV?

% RQ1：LLM在CSPV的语法正确表现如何？
% RQ2：CRoST的测评方法是否有效？
% RQ3：在CSPV中，LLM的能力边界是什么？
% RQ4：在不同类型的任务下，LLM的表现如何？
% RQ5：LLM在CSPV中开销有多大？
% RQ1: How well do LLMs produce syntactically valid specifications in CSPV?
% RQ2: How effective is CRoST as an evaluation metric?
% RQ3: What are the capability boundaries of LLMs in CSPV?
% RQ4: How do LLMs perform across different task types in CSPV?
% RQ5: What is the computational cost of using LLMs in CSPV?
\begin{rqbox}
\textbf{RQ1.} How well do LLMs produce syntactically valid specifications in CSPV?
\end{rqbox}
% \subsection{Few-shot learning}
% \label{subsec:fewshot}
% 在进行测试前，首先需要考虑的一个问题是相比于python等语言，tamarin的语言是一种可供学习样本较少的领域专用语言（DSL），为了发挥大模型对密码协议的推理与安全属性定义能力，我们需要首先使大模型学习该DSL知识
% 幸运的是，由于tamarin的lemma语法较为简单，见ref{fig:tamarin-secrecy-lemmas}，其所有组成仅为一阶逻辑谓词与时序关系，我们认为可以通过简单的few-shot learning的方式使大模型生成可用的lemma，样本为协议的rule内容，输出为1-4个对应的lemma，我们通过简单的对比实验说明这点，图结果显示在输入rule直接提问lemma的情况下lemma通过tamarin编译的正确率仅为25.7%，可以认为qwen3-coder模型基本无tamarin相关语法知识，在使用few-shot之后提高到了56.1%。在表中，使用few-shot方法，大部分的模型正确率都在70以上。

% Tamarin 的规范语言是一种小众的领域专用语言（DSL），并非通用编程语言例如Python、C、Java等，LLM的DSL生成任务通常面临难以有效生成正确语法的问题，因此我们首先探究LLM能否有效生成可用的lemma描述语言。
% 这里找一下更好地表达
% Since Tamarin uses a domain-specific specification language rather than a general-purpose programming language such as Python, C, or Java, relevant training data are relatively scarce \cite{tamarin-manual,10.1007/978-3-642-39799-8_48, tuccio-etal-2025-grammar, jiao-etal-2024-exploring}. LLMs often struggle to generate DSL code with correct syntax \cite{11334235}. We first investigate whether LLMs can reliably produce usable lemma specifications.
% We therefore adopt few-shot prompting (in-context learning) \cite{10.5555/3495724.3495883, min-etal-2022-rethinking} by providing a small number of demonstrations in the form of (protocol-rule snippet →\rightarrow 1--4 corresponding lemmas).

Since Tamarin uses a domain-specific specification language rather than a general-purpose programming language such as Python, C, or Java, publicly available examples are relatively limited \cite{tamarin-manual,10.1007/978-3-642-39799-8_48, tuccio-etal-2025-grammar, jiao-etal-2024-exploring}. In practice, this challenge often appears first at the syntactic level. LLMs may fail to produce lemma specifications that are well-formed and tool-checkable \cite{11334235}. 
We therefore begin by investigating whether LLMs can reliably produce usable lemma specifications, and whether few-shot prompting (in-context learning) \cite{10.5555/3495724.3495883, min-etal-2022-rethinking} can improve syntactic validity and downstream verification performance by providing a small number of demonstrations in the form of multiset rewriting rules with 1--4 corresponding lemmas.

% 中文直译：我们首先做一个最直接的对比实验：对同一批 rule 输入，分别使用 zero-shot 与 few-shot 提示，让模型生成 lemmas，并以“能否通过 Tamarin 解析/编译（无语法错误）”作为 Syntax Success 的判定标准。图ref{fig:zeroshot_vs_fewshot_syntax}展示了 qwen3-coder-plus 的结果：few-shot 方法提升了 30.4 个百分点；这表明在缺少 DSL 先验时，few-shot 对“先学会写对”非常关键。
We start with a comparison between zero-shot and few-shot prompting on the same set of rule inputs.
The \emph{syntax success} means that the generated lemmas can be parsed and compiled by Tamarin without syntax errors.
Figure~\ref{fig:zeroshot_vs_fewshot_syntax} shows the results for qwen3-coder-plus.
Compared with zero-shot prompting, few-shot prompting improves the syntax success rate by 30.4 percentage points. This indicates that few-shot prompting can significantly improve the syntax success rate.

\begin{figure}[htb]
  \centering
  \includegraphics[width=\linewidth]{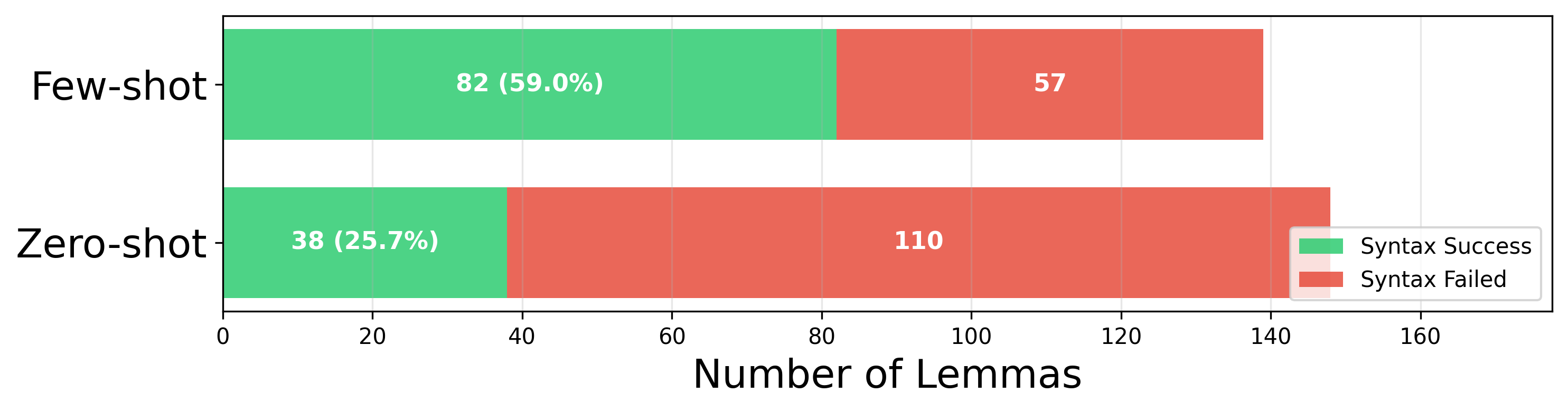}
  \caption{\textbf{Zero-shot vs.\ few-shot syntax success for \texttt{qwen3-coder-plus}.}}
  \label{fig:zeroshot_vs_fewshot_syntax}
\end{figure}

% 中文直译：并在表ref{tab:syntax_rate} 中汇总各模型的语法通过率；大多数模型在 few-shot 下超过 70\%，从而为后续基于 CRoST 的语义/能力评估提供了可用的输入前提。在后续的所有主实验中，我们默认使用 few-shot 提示作为生成配置
Table~\ref{tab:syntax_rate} summarizes the syntax success rates across models; the majority exceed 70\% under few-shot prompting, providing a necessary foundation for the later evaluation that focuses on proof-search behavior and security property coverage rather than surface syntax.
In all subsequent experiments, we use few-shot prompting as the default generation setting.

\begin{table}[htb]
\centering
\caption{Syntax Success Rate by Model.}
\label{tab:syntax_rate}
\small
\setlength{\tabcolsep}{4pt}
\begin{tabular}{lccc}
\toprule
Model & Generated & Syntax Success & Rate (\%) \\
\midrule
grok-4 & 75 & 74 & 98.67 \\
claude-sonnet-4.5 & 245 & 241 & 98.37 \\
gpt-5.2-codex & 118 & 114 & 96.61 \\
claude-opus-4.5 & 166 & 155 & 93.37 \\
gemini-3-flash-preview & 95 & 86 & 90.53 \\
gpt-5.2 & 204 & 169 & 82.84 \\
claude-haiku-4.5 & 290 & 230 & 79.31 \\
qwen-plus & 78 & 57 & 73.08 \\
deepseek-v3.2 & 230 & 166 & 72.17 \\
gpt-4 & 74 & 46 & 62.16 \\
qwen3.5-plus & 218 & 135 & 61.93 \\
qwen3-coder-plus & 139 & 82 & 58.99 \\
\bottomrule
\end{tabular}
\end{table}

\begin{rqbox}
\textbf{RQ2.} How effective is CRoST as an evaluation metric?
\end{rqbox}

% 在第三节中，我们证明了CRoST与lemma生成能力之间的关系。此外，我们设计了突变实验来进一步证明CRoST的合理性。
In section \ref{sec:methodology}, we prove the relationship between CRoST and security properties similarity. Furthermore, we designed mutation experiments \cite{10.1007/978-3-031-78946-5_20} to further prove the rationality of CRoST.

% % 由于真实密码协议中漏洞稀缺，为了证明CRoST指标与lemma生成能力之间的关系，我们人为设计了一些明显的漏洞注入到密码协议中，漏洞类型如表所示，并在同一生成lemma上观察结果是否发生翻转，即，如图所示，若某条 lemma 在原始协议上被验证（verified），但在注入漏洞后的变异协议上被反例推翻（falsified），则认为该漏洞被该 lemma 检出。
% We apply the same operator suite across protocols to generate vulnerable variants that capture common classes of symbolic protocol bugs, including confidentiality breaks, authentication breaks, and executability/sanity breaks \cite{jia2011mutation,xmen2020,xmen2023}.
% For each protocol, we instantiate mutants by applying one operator at one applicable program point, while keeping the lemma set unchanged.

% 中文直译：由于真实密码协议中漏洞稀缺。为验证 CRoST 是否能反映 lemma 的检验能力，我们采用突变测试：人为向协议规则注入一组显式漏洞（见表ref{tab:mutation_ops}），并在同一组生成 lemmas 上观察验证结果是否发生翻转。
% 中文直译：如ref{fig:mutation_example} 所示，若某条 lemma 在原始协议上被验证（verified），但在注入漏洞后的变异协议上被反例推翻（falsified），则认为该漏洞被该 lemma 检出；我们用翻转率刻画在一组变异体上的整体检出能力，并据此检验 CRoST 的有效性。
% 这里需要加一段，确实和CRoST一样mutation test的确也可以作为一种测试lemma生成能力的指标，但是突变测试的局限性在于需要手工设计协议漏洞，需要大量的工作量，此外，真实协议的漏洞是隐蔽的，人工设计突变样本难以cover全部的lemma能力，难以有效观察形式化验证的搜索过程；因此我们在这里仅用突变测试印证CRoST的rationality
Since real-world cryptographic protocol vulnerabilities are scarce, we adopt mutation testing to assess whether \textsc{CRoST} is a rational metric \cite{9793974}. Specifically, we inject a suite of simple, explicit flaws into protocol rules (Table~\ref{tab:mutation_ops}), generated with Gemini-3-Pro and manually vetted by human experts. These vulnerabilities are designed as controlled, verification-relevant fault proxies. Since security properties characterize the intended security behavior of a protocol, any vulnerability should violate at least one of them. Therefore, if a lemma captures the target property well, a local perturbation that breaks that property should cause the verification result to flip, allowing us to test whether \textsc{CRoST} responds appropriately to property-breaking changes.
As illustrated in Figure~\ref{fig:mutation_example}, if a lemma is \emph{verified} on the original protocol but becomes \emph{falsified} on a mutant, we treat the injected vulnerability as detected by that lemma.
We then use the resulting flip rate over mutants to assess the validity of \textsc{CRoST}.

% 中文注释：需要说明的是，突变测试本身也可作为衡量 lemma 生成能力的指标，但它存在局限：构造高质量的协议漏洞需要大量人工设计工作；且真实协议漏洞往往隐蔽多样，人工突变样本难以覆盖完整的能力谱系，也难以呈现形式化验证过程中的细粒度搜索行为。因此我们在此仅将突变测试作为对 CRoST 合理性的补充印证，而非主要评测指标。
What needs to be explained here is mutation testing itself can serve as an indicator of lemma-generation capability, but it has practical limitations \cite{10.1145/2635868.2635929, 8453121}.
Constructing protocol flaws requires substantial manual effort, and real-world vulnerabilities are often covert, making handcrafted mutants unlikely to cover the full spectrum of lemma capability.
Moreover, flip-based outcomes provide limited visibility into the verifier’s fine-grained proof search behavior.
Therefore, we use mutation testing here only as supporting evidence for the rationality of \textsc{CRoST}, rather than as our primary evaluation metric \cite{10.1145/3324884.3416667}.

\begin{figure}[htb]
  \centering
  \includegraphics[width=\linewidth]{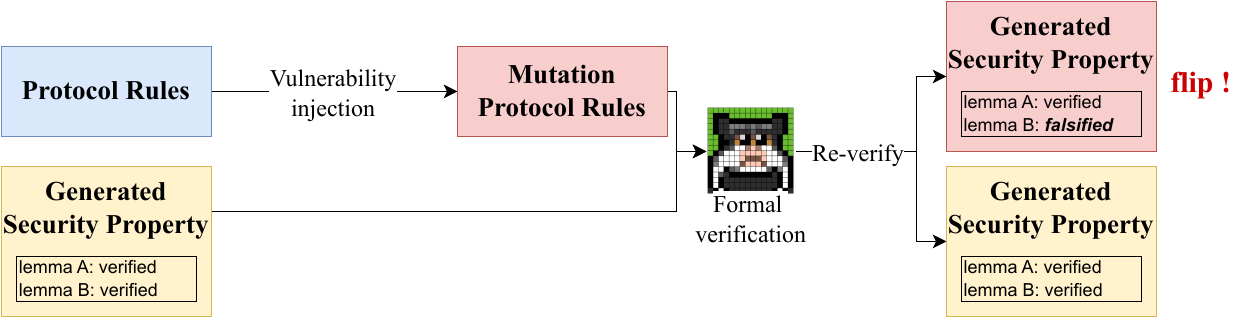}
  \caption{\textbf{Mutation test experiment sketch.}}
  \label{fig:mutation_example}
\end{figure}

\begin{table}[htb]
\centering
\caption{Simple modifications to construct vulnerable protocol variants.}
\label{tab:mutation_ops}
\small
\setlength{\tabcolsep}{4pt}
\begin{tabular}{p{0.3\linewidth}p{0.65\linewidth}}
\toprule
\textbf{Type} & \textbf{Operation description} \\
\midrule
Direct leakage &
Add \texttt{Out(...)} to directly release a secret value. \\

Weak key derivation &
Modify key-derivation inputs to include public constants, making the derived key predictable/computable by the adversary. \\

Removed verification &
Delete signature/MAC verification guards (e.g., dropping an \texttt{Eq(verify(...), true)}-style check), enabling message forgery. \\

Weakened binding &
Break identity/nonce binding by replacing a bound variable with an unconstrained one. \\

Missing freshness check &
Remove freshness-related checks so that replayed messages can be accepted without detection. \\

Incomplete message &
Remove authentication-critical fields (e.g., a nonce) from transmitted messages, preventing the receiver from validating sanity/freshness. \\

Broken message flow &
Change message tags/formats so expected parsing/transition rules no longer match, disrupting sanity. \\

Missing response &
Directly delete a required output/response. \\
\bottomrule
\end{tabular}
\end{table}

% 中文直译ref{fig:flip_rate_analysis} 给出覆盖率与翻转率的散点关系，并报告皮尔逊相关系数；整体相关系数为 0.796，且在 secrecy/authentication/sanity 三类属性上分别为 0.670/0.769/0.681，表明 CRoST 与漏洞检出能力具有稳定的正相关。
We tested lemmas generated by current state-of-the-art (SOTA) LLMs. Figure~\ref{fig:flip_rate_analysis} shows a strong positive correlation between CRoST coverage and flip rate (correlation coefficient = 0.75, p-value = 0.005).
This evidence supports the validity of CRoST as a fine-grained, tool-grounded indicator for how well generated lemmas capture security properties. 

\begin{figure}[htb]
  \centering
  \includegraphics[width=0.8\linewidth]{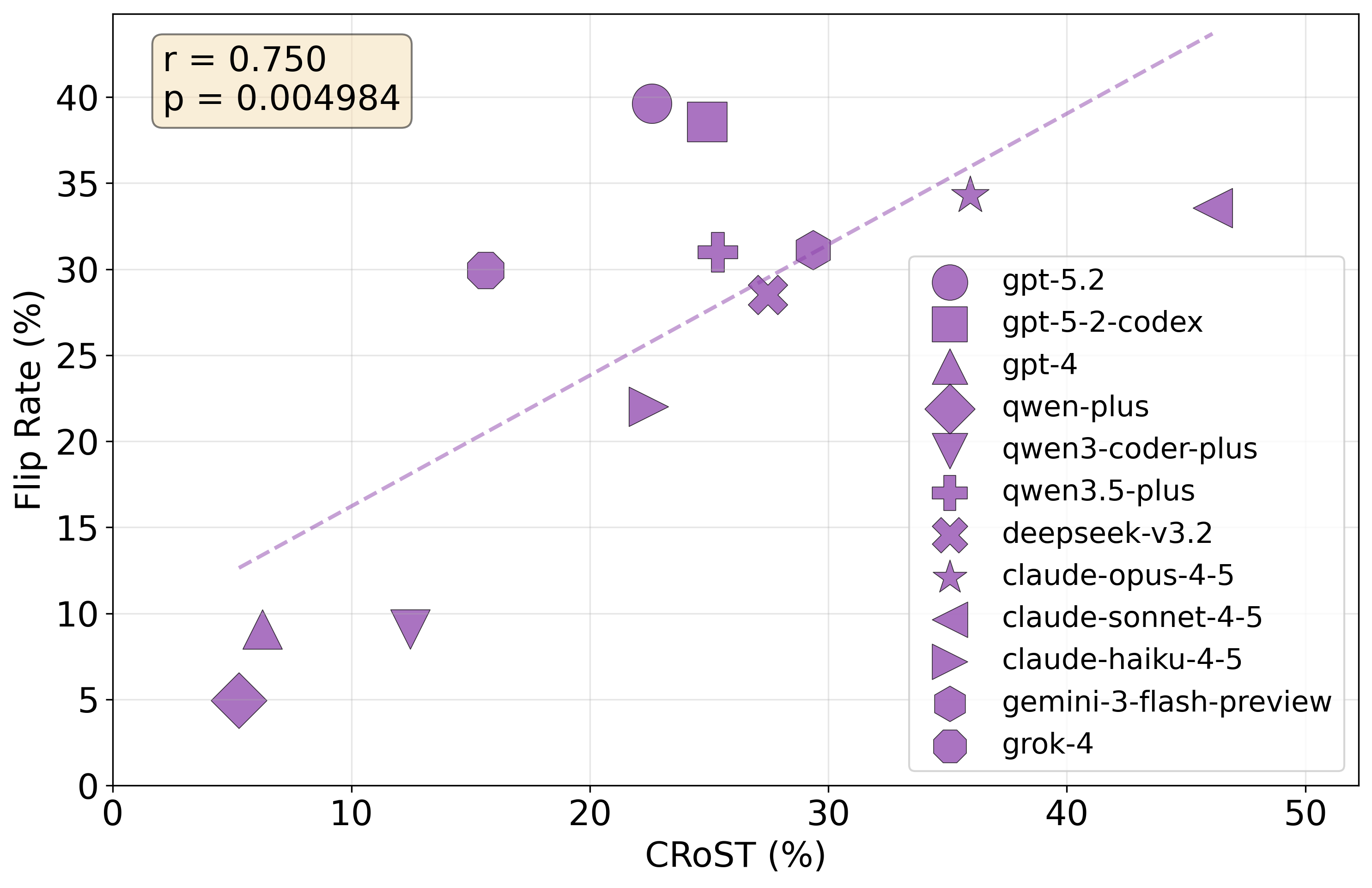}
  \caption{\textbf{CRoST coverage vs.\ lemma flip rate under mutation testing.}}
  \label{fig:flip_rate_analysis}
\end{figure}

\begin{rqbox}
    \textbf{RQ3.} What are the capability boundaries of LLMs in CSPV?
\end{rqbox}
% 这里还需要说明的一点是：其实想证伪这个问题：如果这是一个穷举搜索的问题，那么是否让一个能力较弱的大模型运行足够多次，就能覆盖所有的路径？-实际不行，会达到一个上界
% 中文直译：在 RQ2 已验证 CRoST 作为能力度量有效之后，本节进一步关注“模型能力边界”本身：不同模型的 CRoST 差异究竟来自何处？一个直观的可能性是，更强的模型只是生成了更多 lemmas，从而更容易覆盖到目标 solve-tree 的路径；若如此，能力边界可能主要受限于采样规模而非推理质量。
With the validity of \textsc{CRoST} established in RQ2, we next examine the capability boundary of LLMs in CSPV.
A natural hypothesis is that higher \textsc{CRoST} scores may be driven mainly by generating more lemmas, i.e., stronger models simply explore more candidates and thus cover more solve-tree paths \cite{Kaplan2020ScalingLF}.

% 中文直译：图fig:lemma_count_vs_coverage} 检验了“生成 lemma 数量”与 CRoST 的关系。结果显示二者不是很相关（r=0.580r=0.580, p=0.048p=0.048），且离散性明显：在相近的生成规模下，不同模型的 CRoST 可能相差较大；同时，生成更多 lemmas 并不必然带来更高覆盖率。这表明 CSPV 中的能力差异不能简单归因于“生成数量”。
Figure~\ref{fig:lemma_count_vs_coverage} evaluates the relationship between the number of generated lemmas and \textsc{CRoST}.
We observe less correlation with substantial dispersion ($\text{p-value}=0.04825$).
This suggests that CSPV capability differences are not explained primarily by lemma quantity.

\begin{figure}[htb]
  \centering
  \includegraphics[width=0.8\linewidth]{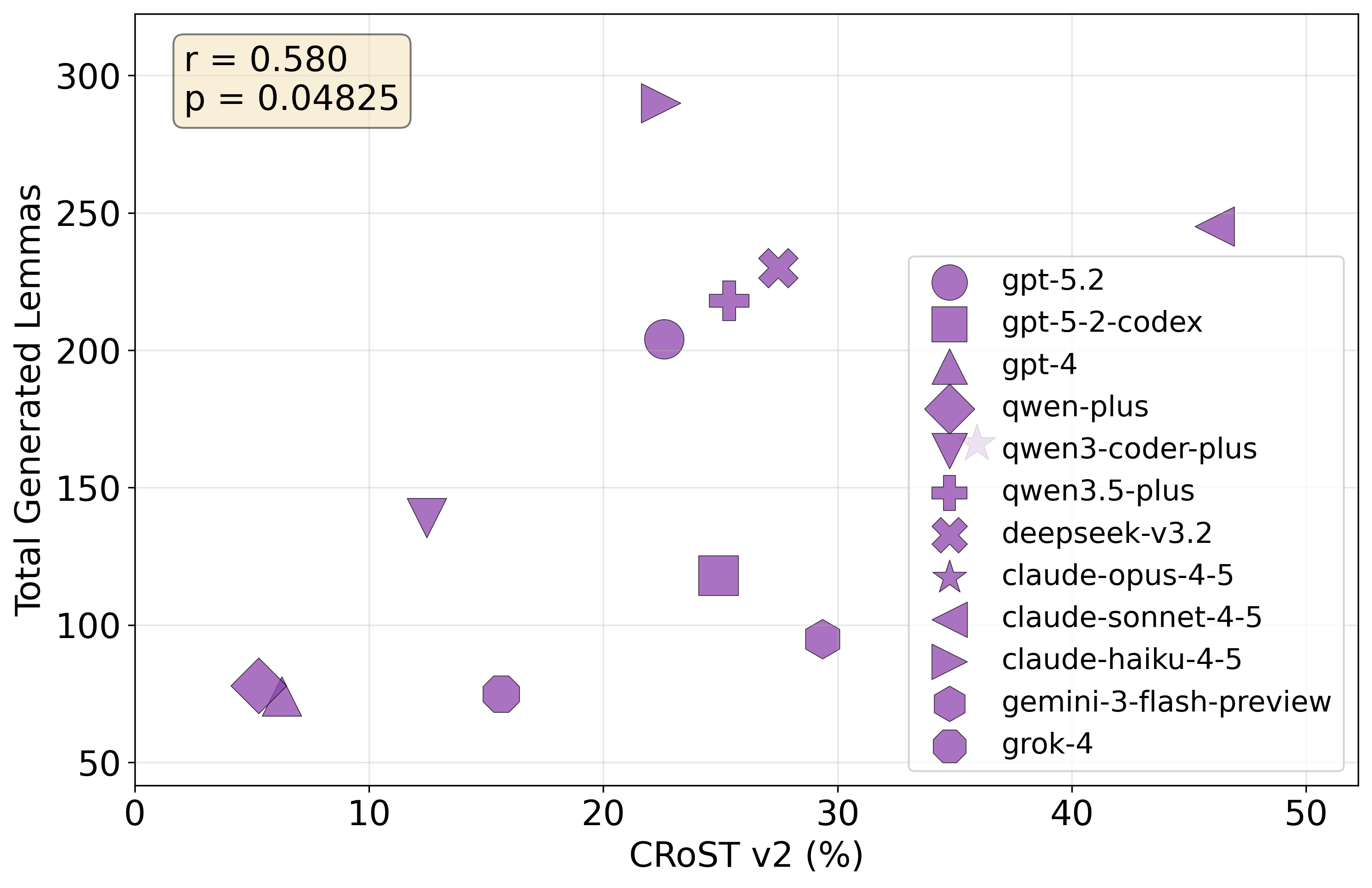}
  \caption{\textbf{Lemma count vs.\ \textsc{CRoST}.}
  Each point is an LLM.}
  \label{fig:lemma_count_vs_coverage}
\end{figure}

% 中文直译：进一步地，我们固定同一模型，重复运行完整 pipeline，并统计“累计生成规模”与“累计覆盖率”的增长趋势（fig:cumulative_coverage}）。在 12 轮迭代中，累计生成 lemma 数从 239 增长至 2672，但 CRoST 仅提升约 24 个百分点，并在第 8 轮后基本停止增长。这反映出明显的收益递减与覆盖率饱和：继续采样主要带来重复或弱变体，而非新增可覆盖的 solve 路径。因而，提升 CSPV 表现的关键在于提升 lemma 的有效性与覆盖增量（能触达新 solve 路径），而不是追求更高的生成数量。
To further probe the boundary under repeated sampling, we run the same pipeline multiple times with a fixed LLM deepseek-v3.2 and track cumulative statistics (Figure~\ref{fig:cumulative_coverage}).
Across 12 iterations, cumulative generated lemmas increase from 239 to 2672, while \textsc{CRoST} improves by only about 24 percentage points and largely plateaus after the 8th run.
This indicates diminishing returns and coverage saturation: additional sampling mainly produces redundant or weak variants that contribute little new solve-path coverage.
Therefore, improving CSPV performance requires generating more informative lemmas that contribute incremental solve-path coverage, rather than increasing the raw number of candidates.

\begin{figure}[htb]
  \centering
  \includegraphics[width=0.8\linewidth]{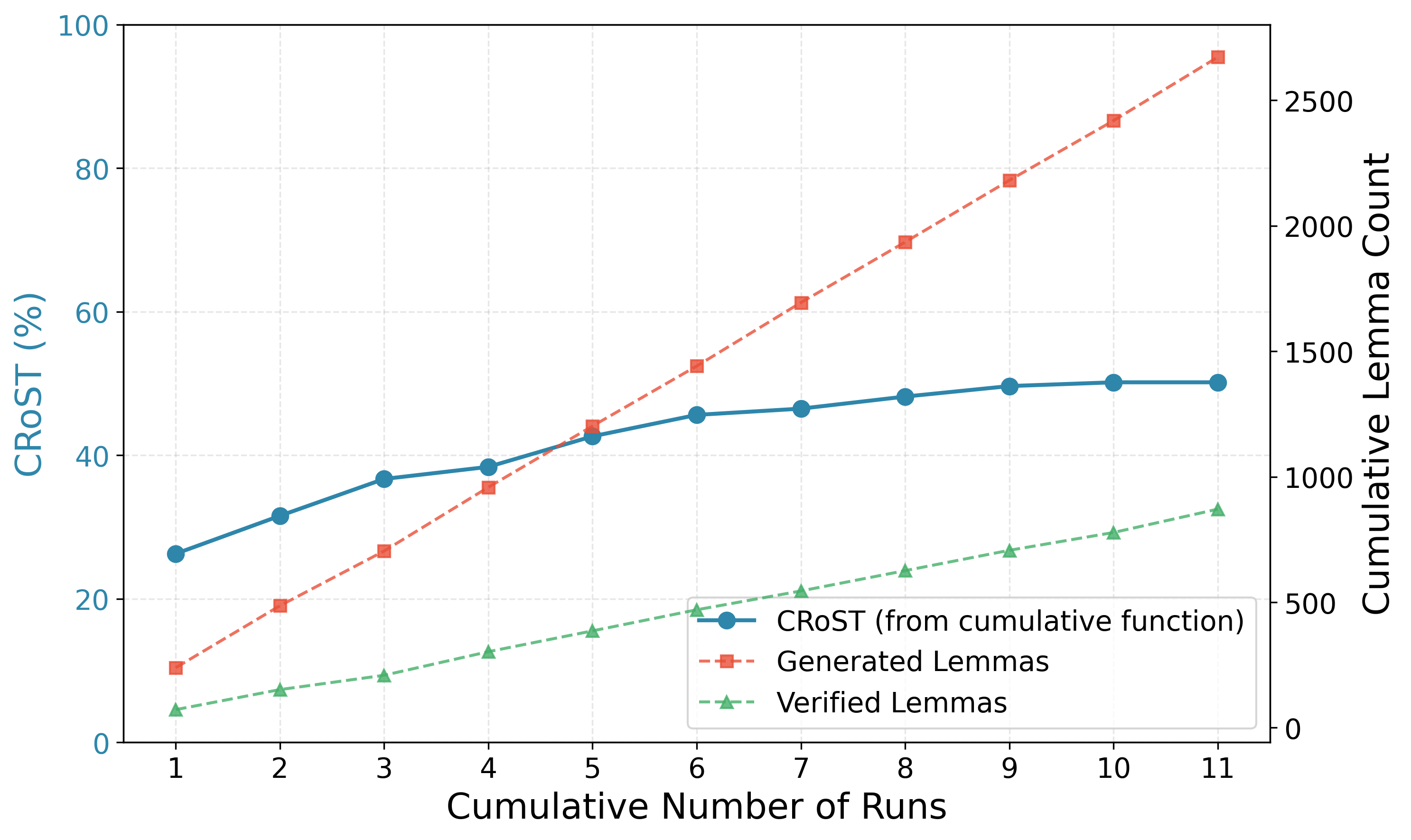}
  \caption{\textbf{Cumulative sampling and \textsc{CRoST}.}}
  \label{fig:cumulative_coverage}
\end{figure}

\begin{rqbox}
    \textbf{RQ4.} How do LLMs perform across different task types in CSPV?
\end{rqbox}

% 中文直译：我们将不同类型的安全属性视为不同的 CSPV 任务类型（secrecy / authentication / sanity），并分别报告各模型在这些任务上的 CRoST，以刻画其能力结构与偏差。
We treat major security-attribute families as distinct task types in CSPV (secrecy, authentication, and sanity), and report per-task \textsc{CRoST} to characterize capability differences and model biases across tasks.

The results in Table~\ref{tab:CRoST_by_property} indicate that, on average, SOTA LLMs perform better on authentication and sanity properties than on secrecy properties. Figure~\ref{fig:lemma_crost_top4} offers a different, lemma-level perspective. The highest-scoring target lemmas are disproportionately concentrated in sanity properties. A plausible explanation is that secrecy and authentication properties are often expressed as \emph{all-traces} universally quantified lemmas, whereas sanity properties are more commonly formulated as \emph{exists-trace} existential lemmas. This suggests that current LLMs are generally more effective at generating existential security properties than universally quantified ones.

\begin{table}[htb]
\centering
\caption{CRoST by Security Property (\%).}
\label{tab:CRoST_by_property}
\small
\setlength{\tabcolsep}{4pt}
\begin{tabular}{lcccc}
\toprule
Model & Overall & Secrecy & Authentication & Sanity \\
\midrule
claude-sonnet-4.5 & 46.11 & 30.91 & 55.85 & 39.97 \\
claude-opus-4.5 & 35.94 & 42.10 & 20.02 & 39.83 \\
gemini-3-flash-preview & 29.36 & 35.78 & 24.28 & 30.74 \\
deepseek-v3.2 & 27.46 & 20.51 & 24.25 & 35.47 \\
\textit{top4 LLMs} & \textit{38.82} & \textit{33.90} & \textit{40.56} & \textit{39.95} \\
qwen3.5-plus & 25.37 & 9.32 & 36.48 & 28.34 \\
gpt-5.2-codex & 24.91 & 27.50 & 21.55 & 20.78 \\
gpt-5.2 & 22.60 & 14.87 & 25.98 & 27.44 \\
claude-haiku-4.5 & 22.46 & 8.44 & 19.90 & 35.89 \\
grok-4 & 15.63 & 14.87 & 7.32 & 22.85 \\
qwen3-coder-plus & 12.46 & 7.07 & 21.24 & 5.37 \\
gpt-4 & 6.28 & 6.67 & 3.24 & 3.54 \\
qwen-plus & 5.28 & 2.54 & 6.83 & 6.40 \\
\bottomrule
\end{tabular}
\end{table}

\begin{figure*}[t]
    \centering
    \includegraphics[width=\linewidth]{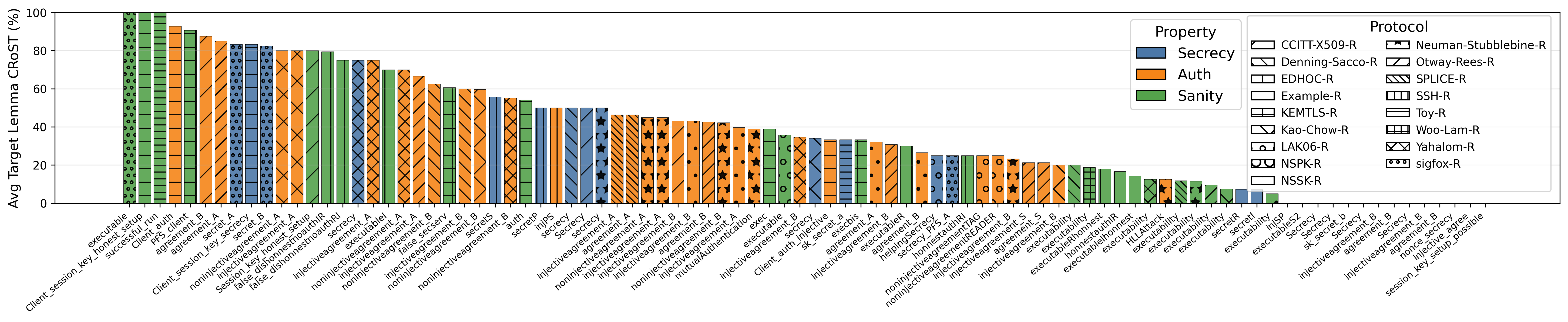}
    \caption{Top4 LLMs' \textsc{CRoST} on target lemma.}
    \label{fig:lemma_crost_top4}
\end{figure*}

% 中文直译：为理解上述差异的具体原因，我们对 Overall 表现最好的四个模型进一步开展 case study（\emph{claude-sonnet-4.5}、\emph{claude-opus-4.5}、\emph{gemini-3-flash-preview}、\emph{deepseek-v3.2}），重点分析大型协议中的失败模式。
To understand the failure modes of SOTA LLMs, we conduct a focused case study on the top4 LLMs by overall \textsc{CRoST} (\emph{claude-sonnet-4.5}, \emph{claude-opus-4.5}, \emph{gemini-3-flash-preview}, and \emph{deepseek-v3.2}), and analyze representative failures on lemma generation.

\begin{casebox}
    \textbf{Case study.} Stage point confusion in large protocols.
\end{casebox}
% 需要说明对于rule作用的理解能力
% 需要加一张示意图
% 为了进一步阐述LLM生成lemma的失败模式，我们以多阶段 AKE 协议 \texttt{SSH-R} 为例进行 case study。该协议的目标属性绑定到多个阶段点（不同的 \texttt{Accept*} 事件）。
To concretely illustrate failure modes of LLMs to generate lemma, we present a case study on the multi-phase Authenticated Key Exchange (AKE) protocol \texttt{SSH-R}.
Its target properties are bound to stage-specific action facts.

% 在该协议中，客户端的认证由两个阶段构成:第一阶段（action fact AcceptP）对应“传输层密钥交换 + 服务器认证”;第二阶段（action fact AcceptP2）对应“用户认证 + 最终密钥确认”。\texttt{injPS} 约束第一阶段的 \texttt{AcceptP} 必须在此前由对端 \texttt{AcceptS} 见证；\texttt{injSP} 约束第二阶段的 \texttt{AcceptS2} 必须对应此前的 \texttt{AcceptP2}；\texttt{executableS2} 的可执行性见证要求最终阶段事件 \texttt{AcceptS2} 实际发生。
In \texttt{SSH-R}, client-side authentication proceeds in two stages: the first stage is marked by the action fact \texttt{AcceptP} and corresponds to transport-layer key exchange plus server authentication, while the second stage is marked by \texttt{AcceptP2} and corresponds to \emph{user authentication plus final key confirmation}.
Accordingly, the target lemmas are stage-specific rather than generic agreement templates:
\texttt{AcceptP} (phase~1) is bound to a prior \texttt{AcceptS} witness,
\texttt{AcceptS2} (phase~2) is bound to a prior \texttt{AcceptP2}.

We observed that the generated lemmas still demonstrate a reasonable understanding of the roles of action facts.
That is, LLMs are often able to distinguish whether an action fact corresponds to key establishment, authentication, or message confirmation, and thus tend to preserve the high-level functional intent of the protocol. More importantly, however, LLMs also exhibit a representative form of \emph{stage-point confusion} in generated lemmas.
In Figure~\ref{fig:failure_mode}, while the LLM captures the high-level intent of agreement/authentication, it misaligns stage points by relating \texttt{AcceptP2} (phase~2) to \texttt{AcceptS} (phase~1), instead of the target \texttt{AcceptP} (phase~1) to \texttt{AcceptS} (phase~1), and forgets the temporal relationship.
This phase mismatch breaks the intended correspondence structure.

\begin{figure}[htb]
    \centering
    \includegraphics[width=0.8\linewidth]{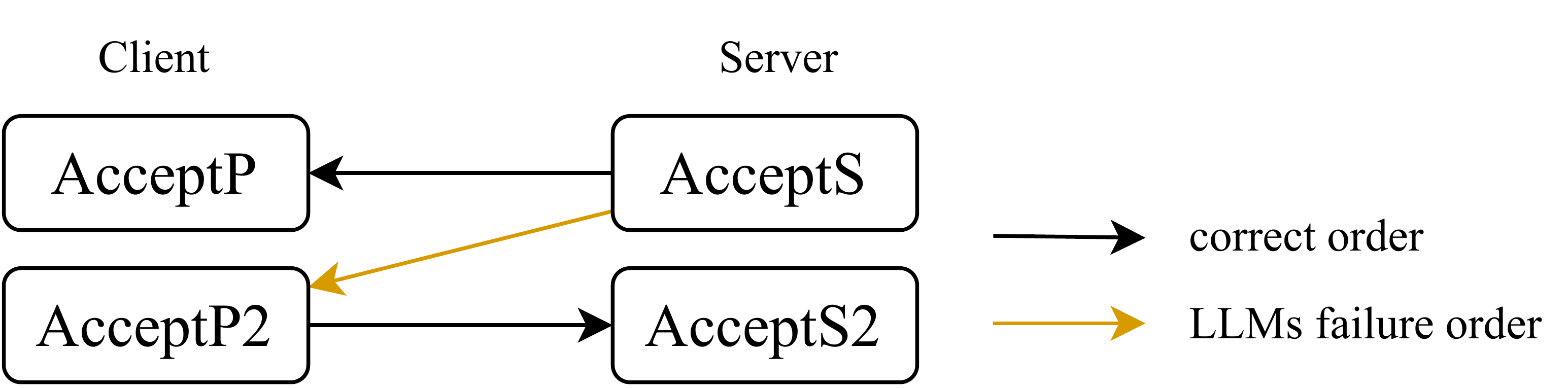}
    \caption{Example of an LLM failure mode in the basic SSH protocol.}
    \label{fig:failure_mode}
\end{figure}

% 中文直译：这种“阶段点混淆”在大型协议中尤为致命：由于协议由多轮交互与多种密码原语组合而成，往往引入大量阶段性 action facts（例如 \texttt{Accept}, \texttt{Accept2}, \texttt{AcceptP}, \texttt{AcceptP2} 等）。一旦在 lemma 中混用了不同阶段点，即使只偏移一个谓词名，也会使证明搜索无法命中目标 lemmas 的关键 solve 路径，从而造成覆盖率在该属性族上显著下降（例如 authentication 只能停留在较低水平）。
% 该 case study 表明：当前 LLM 无法精确对齐阶段性事件与其时序约束，因此在CSPV任务上难以处理较为复杂的协议。
This issue is particularly damaging for large protocols with many phase-specific action facts induced by multi-round message flows and composed cryptographic primitives.
A single stage point mix-up can prevent proof search from reaching the target solve-tree paths, resulting in substantial coverage loss for the affected lemma family.
This case study indicates that current LLMs often fail to precisely align stage-specific events with their temporal constraints, which makes them struggle with CSPV tasks on more complex protocols.

\begin{rqbox}
    \textbf{RQ5.} What is the computational cost of using LLMs in CSPV?
\end{rqbox}

% 中文直译：我们从两方面衡量开销：（1）LLM 生成阶段的 token 输出规模；（2）对生成 lemmas 进行形式化验证的运行时间。所有实验在同一硬件与同一 Tamarin 配置下进行：Intel(R) Core(TM) i9-14900 (2.00 GHz), 24 cores, 32GB RAM；Tamarin 运行参数为 \texttt{+RTS -N -RTS --auto-sources}（启用多核与自动 sources 预处理）。
We measure cost from two aspects: (i) LLM token output during lemma generation, and (ii) runtime for formally verifying generated lemmas.
All experiments were executed on Intel(R) Core(TM) i9-14900 (2.00 GHz), 24 cores, 32GB RAM, with Tamarin configured as \texttt{+RTS -N -RTS --auto-sources} \cite{tamarin-manual}.

% 中文直译：Token 统计方面，我们统一使用 Python 的 \texttt{tiktoken} 库计算 token 数，以便在不同模型间保持可比性；所有调用的输入提示（prompt）长度固定为 1049 tokens（仅报告输出 token）。需要注意，不同厂商模型的原生 tokenizer 可能不同，因此该统计用于横向对比而非精确计费。
For token accounting, we use the Python \texttt{tiktoken} library for consistent counting across LLMs.
The prompt length is fixed at 1049 tokens for all calls; Table~\ref{tab:token_cost} reports average output tokens only.
% Since providers may use different native tokenizers, these counts are intended for comparability rather than exact billing.

\begin{table}[t]
\centering
\caption{Average output tokens per call.}
\label{tab:token_cost}
\small
\setlength{\tabcolsep}{4pt}
\begin{tabular}{lc}
\toprule
LLM & Avg Output Tokens \\
\midrule
claude-sonnet-4.5 & 1143 \\
claude-opus-4.5 & 691 \\
gemini-3-flash-preview & 455 \\
deepseek-v3.2 & 844 \\
\bottomrule
\end{tabular}
\end{table}

% 中文直译：时间开销方面，我们报告“验证生成 lemmas”的总运行时间，并与原始协议 target lemmas 的验证时间（baseline）对比。复杂协议的形式化分析通常会消耗大量时间与内存（已有工作在真实协议模型上报告了小时级运行时间与数十 GB 内存占用），因此验证阶段的时间开销是 CSPV 可用性的关键因素。
For runtime, we report the total time spent verifying generated lemmas, and compare it against verifying the original target lemmas as a baseline.
Formal analysis of realistic protocol LLMs can be time- and memory-intensive, making verification runtime a key usability factor for CSPV.

\begin{table}[htb]
\centering
\caption{Verification runtime of generated lemmas.}
\label{tab:runtime_cost}
\small
\setlength{\tabcolsep}{3pt}
\begin{tabular}{lccc}
\toprule
Setting & Lemmas & Total Time (s) & Mean (s/lemma) \\
\midrule
Baseline (target lemmas) & 91 & 59.91 & 0.66 \\
claude-sonnet-4.5 & 245 & 1267.76 & 5.17 \\
claude-opus-4.5 & 166 & 1127.33 & 6.79 \\
gemini-3-flash-preview & 95 & 238.24 & 2.51 \\
deepseek-v3.2 & 230 & 1151.46 & 5.01 \\
\bottomrule
\end{tabular}
\end{table}

% 中文直译：相较 baseline，验证生成 lemmas 的总时间约为 4--21×\times（分别约为 4.0×\times/21.2×\times 的范围）。这表明在 CSPV 场景中，生成阶段的开销之外，\emph{大量错误或弱质量 lemmas 带来的验证尝试成本}是主要瓶颈之一；因此，构造更高质量、能减少无效尝试的 lemmas（例如提升有效覆盖增量）是重要的后续方向。
Compared to the baseline, verifying generated lemmas incurs roughly a $4\times-21\times$ slowdown.
This suggests that, beyond generation cost, the dominant overhead in CSPV can stem from verification attempts on incorrect or low-utility lemmas.
Improving lemma quality to reduce wasted verification effort (e.g., by increasing effective coverage gains) is therefore an important direction.
The mean per-lemma verification time also increases from $0.66$s to $2.51$--$6.79$s, indicating that generated lemmas also tend to induce longer formal proof-search cost per lemma.

\subsection{Threats to Validity}
\label{subsec:threats}

\textbf{External validity.}
Our benchmark covers 18 protocols and three major property families: secrecy, authentication, and sanity.
However, protocol complexity and property distribution may affect the observed results, especially in RQ4 where we compare model performance across different security-property types.
Therefore, the results may not fully generalize to larger protocol collections or to property families not covered by our benchmark.

\textbf{Internal validity.}
LLM-based lemma generation is sensitive to prompt design, in-context examples, and decoding randomness.
Although we use a unified prompting strategy and the same evaluation pipeline across models, the inherent randomness of LLM generation may still introduce variability in the generated lemmas and the resulting CRoST scores.

\section{Discussion}
\label{sec:discussion}

% \subsection{Limitation}
% \label{subsec:limitation}

% Several factors may influence the observed results.

% % 数据集可能不均衡，协议的复杂程度、数据集的规模可能会影响实验结果，特别是在RQ4中的不同安全属性类型的表现
% % 尽管使用相同的提示词，大模型的回答仍然会受到随机的生成过程的影响，可能会导致结果的波动。
% % CRoST
% \textbf{Dataset imbalance.}
% The complexity of protocols and the size of the dataset may affect the results, especially in RQ4 where the performance across different security property types is analyzed.

% \textbf{Prompt sensitivity.}
% The generation quality is affected by prompt design and in-context examples. 
% Although we use a unified prompting strategy, the inherent randomness in LLMs generation can lead to variability in results.

% \textbf{Metric limitations.}
% \textsc{CRoST} may have limitations. When a proof path contains simple solve items, such as \texttt{solve(!KU(x))}, a node in an unrelated generated lemma may still cover the target lemma at that point, leading to a small number of false positives. This issue can be mitigated by introducing additional constraints into the \textsc{CRoST} computation.

\subsection{Future work}
% 未来工作可沿三个更具体的方向推进。首先，CRoST 提供了比“proved/falsified”更细粒度的可学习信号，因此可将其作为强化学习（或其它优化方法）的奖励/反馈，用于直接优化 lemma 生成策略以最大化覆盖增量并减少无效验证尝试。其次，要扩展到复杂大型协议，关键瓶颈在于阶段点混淆：未来需要显式建模协议阶段结构，并约束生成的 correspondence lemmas 在阶段点与时序关系上与协议语义一致。最后，若面向实际漏洞发现，还需打通“符号攻击轨迹”到“现实语义漏洞”的映射：即如何将 Tamarin 返回的攻击路径解释为具体实现/部署中的可触发条件与安全影响，从而形成可操作的漏洞报告。
Our findings suggest several directions for future work. 
First, CRoST can still produce false positives when proof paths contain common low-information solve items, such as \texttt{solve(!KU(x))}; an unrelated generated lemma may cover the target lemma at such points.
Future work can introduce stricter constraints into CRoST computation---such as semantic-category filtering of solve items, context-consistency checks, or minimum-information thresholds---to reduce erroneous matches on common solve items and obtain more precise coverage estimates.

% First, addressing the limitation noted in Section~\ref{subsec:limitation}, future work can introduce stricter constraints into the \textsc{CRoST} computation---such as semantic-category filtering of solve items, context-consistency checks, or minimum-information thresholds---to reduce erroneous matches on common solve items over shorter proof paths and obtain more precise coverage estimates.

Second, scaling CSPV to large, complex protocols requires addressing stage point confusion: future methods should explicitly represent protocol phase structure and constrain generated correspondence lemmas to align stage-specific events and temporal relations with protocol semantics.

Third, for real-world vulnerability discovery, an additional step is needed to provide counterexample explanation.

Fourth, incorporating richer features of protocols and security properties may further improve LLM-based lemma generation. Our current study is only a preliminary evaluation and does not introduce additional protocol-level or property-level features, leaving this as an important direction for future work.

% 中文注释：此外，CRoST 本身仍有优化空间。当前的路径匹配可能在存在“过于简单/通用”的 solve 结点时产生少量误报：例如 \texttt{solve(!KU(\~k))} 这类容易被无关的生成 lemma 在某个分支中偶然覆盖，从而提高覆盖计数但不代表真实能力提升。未来可在 CRoST 计算中加入更严格的约束（例如对 solve-item 的语义类别过滤、上下文一致性检查或最小信息量阈值），以获得更精确的覆盖估计。

% Besides, the current \textsc{CRoST} matching can admit occasional false positives when proof skeletons contain overly simple or generic solve nodes.
% For instance, a target solve obligation such as \texttt{solve(!KU(\~k))} may be spuriously matched by an unrelated generated lemma that happens to pass through a similar trivial obligation on some branch, inflating coverage without reflecting meaningful progress.
% A possible solution is to introduce additional constraints in \textsc{CRoST} computation to obtain more precise coverage estimates.

\section{Related Work}
\label{sec:related_work}

% 中文注释：现有符号化协议验证工具确实提供了一定程度的“一键式”自动化，但这类自动化主要针对粗粒度、预定义的属性模板，难以覆盖场景化的高层安全保证。
Automatic security proofs for cryptographic protocols are not a new research problem \cite{10.1007/11513988_27}. Symbolic protocol verifiers provide some basic automation, but it is largely limited to coarse, pre-defined property templates rather than scenario-specific, high-level guarantees.

% 中文注释：例如，\emph{Scyther} 可以自动实例化通用的认证与机密性声明（尤其是对本地生成值与协议变量的 secrecy/authentication 检查），从而在不完全手写每条 claim 的情况下完成基础验证。但这种自动化在很大程度上局限于通用的、粗粒度的安全目标，并不涉及特定场景的复杂协议。
For example, \emph{Scyther}~\cite{10.1007/978-3-540-70545-1_38} can automatically instantiate generic secrecy and authentication claims for protocol variables and locally generated values, enabling basic checks without fully hand-writing every claim. But this automation is largely confined to generic, coarse-grained security goals and does not address scenario-specific, complex protocols.

% 中文注释：类似地，\emph{Tamarin} 提供了面向证明效率的自动辅助机制，例如 auto-sources lemmas，可自动生成用于证明“受保护子项（如哈希/加密内部秘密）机密性”的辅助引理，并缓解部分解构带来的搜索爆炸。然而，这类自动化主要解决“证明搜索效率/可扩展性”问题，并不能替代属性工程中的核心难点：确定“应该验证哪些高层安全属性”，以及将其精确编码为可被工具检验、且在给定模型与威胁假设下可证明（或可反驳）的 lemmas；完成一套完整分析流程在实践中仍高度依赖专家经验与反复迭代。
Similarly, \emph{Tamarin}~\cite{tamarin-manual} supports automation aimed at proof scalability, such as auto-sources lemmas that are automatically generated to facilitate proving secrecy of protected subterms (e.g., secrets occurring under hashes/encryptions) and to mitigate proof-search blowups due to partial deconstructions~\cite{10.1007/978-3-030-59013-0_1, 10.1145/2810103.2813662}.
However, such automation primarily targets proof-search scalability and does not resolve the central property-engineering challenge: deciding which high-level guarantees should be verified in a given scenario and encoding them precisely as tool-checkable lemmas that are provable or refutable under the chosen protocol model and verification setting.

% 中文注释：近期开始出现将 LLM 与符号协议验证结合的工作，重点是降低“建模门槛”：从自然语言文档/协议描述自动生成可被验证工具接受的符号模型。其中，ICSE'25 的 \emph{LLM-aided Automatic Modeling for Security Protocol Verification} 提出从协议文档逐步生成符号模型（以 SAPIC+/多重集覆写规则为核心表示），并通过阶段化设计与形式化变换约束来提升模型正确性；其主要目标是自动生成协议的 rewriting rules（建模），而非自动生成安全属性/lemmas（属性工程）。
Recent work has begun to combine LLMs with symbolic protocol verification to reduce the \emph{modeling} barrier, i.e., generating tool-checkable symbolic models directly from natural-language protocol descriptions.
\cite{10.1109/ICSE55347.2025.00197} studies how to synthesize symbolic protocol models (centered on SAPIC+ and multiset rewriting rules) from natural-language documents via a staged pipeline with formally constrained transformations.
This line of work primarily targets the generation of protocol rewriting rules (model construction), rather than the generation of security properties/lemmas (property engineering). 

CryptoFormalEval \cite{curaba2024cryptoformaleval} provides a simple evaluation of LLMs in symbolic reasoning. However, they do not provide a method for leveraging LLM capabilities beyond syntax, so their results remain at the level of syntax correctness.

% 中文注释：与之互补地，我们关注在“模型已给定”的情况下，LLM 能否生成可被工具检验的 lemmas，并系统刻画其在不同属性类型与复杂协议上的能力与失败模式。
Complementary to those directions, our work focuses on the setting where the protocol model is already available and examines how well LLMs can generate tool-checkable lemmas.

\section{Conclusion}
\label{sec:conclusion}
% 中文注释：本工作对 LLM 在 CSPV（以 lemma 生成 + 工具检验为核心）的能力进行了系统性实证刻画，并得到若干稳定结论：多数模型需要 few-shot 才能稳定生成可编译的 Tamarin lemmas；不同模型与不同属性类型的 CRoST 差异显著，且在复杂多阶段协议上容易出现阶段点混淆等失败模式；在固定提示与流程下，简单增加生成数量会导致覆盖率收益递减并出现饱和；同时，LLM 生成 lemmas 的验证开销显著高于基线 target lemmas，说明可用性受验证成本制约。
Our work shows that few-shot prompting is often necessary to obtain syntactically valid Tamarin lemmas; \textsc{CRoST} varies substantially across LLMs and property families, and complex multi-phase protocols exhibit recurrent failure modes such as stage point confusion.
Moreover, under fixed prompting and pipeline settings, increasing the number of generated lemmas yields diminishing returns and coverage saturation.
Finally, verifying LLM-generated lemmas incurs substantially higher runtime than verifying baseline target lemmas, highlighting verification cost as a practical constraint.

\section*{Data Availability}

The data supporting the results of this paper are publicly available at Zenodo\footnote{https://doi.org/10.5281/zenodo.20996406}.

% \section*{Acknowledgments}
% ChatGPT was utilized to generate sections of this Work, including text, tables, graphs, code, data, citations, etc.

\bibliographystyle{IEEEtran}
\bibliography{sample-base}

\end{document}